\begin{document}
\baselineskip 0.6cm

\newcommand{\vertiii}[1]{{\left\vert\kern-0.25ex\left\vert\kern-0.25ex\left\vert #1 
    \right\vert\kern-0.25ex\right\vert\kern-0.25ex\right\vert}}
\newcommand{\beq}{\begin{equation}}
\newcommand{\eeq}{\end{equation}}
\newcommand{\bea}{\begin{eqnarray}}
\newcommand{\eea}{\end{eqnarray}}
\newcommand{\beas}{\begin{eqnarray*}}
\newcommand{\eeas}{\end{eqnarray*}}
\newcommand{\tri}{\triangleright}
\newcommand{\range}{{\rm range}}
\newcommand{\Ree}{{\rm Re }}
\newcommand{\Imm}{{\rm Im }}
\newcommand{\diag}{{\rm diag}}
\newcommand{\sign}{{\rm sign}}
\newcommand{\tr}{{\rm tr}}
\newcommand{\rank}{{\rm rank}}
\newcommand{\bp}{\bigskip}
\newcommand{\mdp}{\medskip}
\newcommand{\slp}{\smallskip}
\newcommand{\Rw}{\Rightarrow}
\newcommand{\ts}{& \hspace{-0.1in}}
\newcommand{\nn}{\nonumber}
\newtheorem{exa}{Example}[section]
\newtheorem{thm}{Theorem}[section]
\newtheorem{lem}{Lemma}[section]
\newtheorem{prop}{Proposition}[section]
\newtheorem{fact}{Fact}[section]
\newtheorem{cor}{Corollary}[section]
\newtheorem{defn}{Definition}[section]
\newtheorem{rem}{Remark}[section]

\def\QEDmark{\ensuremath{\nabla}}
\def\proof{\paragraph{Proof:}}
\def\endproof{\hfill\QEDmark}

\renewcommand{\theequation}{\thesection.\arabic{equation}}
\title{\textbf{A Discrete-Time Least-Squares Adaptive State Tracking
Control Scheme with A Mobile-Robot System Study
}}
\author{{\it Qianhong Zhao} and {\it Gang Tao}   \\
Department of Electrical and Computer Engineering\\
University of Virginia\\
Charlottesville, VA 22903, USA}
\date{}

\maketitle

\begin{abstract}
This paper develops an adaptive state tracking control scheme
for discrete-time systems, using the least-squares algorithm, as the new
solution to the long-standing discrete-time adaptive state tracking
control problem to which the Lyapunov method (well-developed for
the continuous-time adaptive state tracking problem) is not
applicable. The new adaptive state tracking scheme is based on a
recently-developed new discrete-time error model which has been used
for gradient algorithm based state tracking control schemes, and uses
the least-squares algorithm for parameter adaptation. The new
least-squares algorithm is derived to minimize an accumulative
estimation error, to ensure certain optimality for parameter
estimation. The system stability and output tracking properties are
studied. Technical results are presented in terms of plant-model matching,
error model, adaptive law, optimality formulation, and stability and
tracking analysis. The developed adaptive control scheme is applied
to a discrete-time multiple mobile robot system to meet an adaptive state
tracking objective. In addition, a collision avoidance mechanism is proposed to prevent collisions in the whole tracking process.    Simulation results are presented, which verify the
desired system state tracking properties under the developed
least-squares algorithm based adaptive control scheme.
\end{abstract}

\begin{quote}
{\bf Key words}: Adaptive control, discrete-time systems, least-squares
algorithms, multi-mobile-robots, state tracking.
\end{quote}

\section{Introduction}
\label{sec:introduction}

In the past decades, multi-mobile-robot path planning related researches attracted tremendous attention \cite{IMR1}, \cite{IMR5}, \cite{IMR4}, \cite{IMR3}, \cite{IMR2}, \cite{wetal20}, as its potential for future applications in the autonomous driving area. With the desired paths, the corresponding tracking schemes are needed to ensure that every robot moves as its corresponding desired path. 
Correspondingly, many multiple robot control related works have been done previously \cite{ftb03}, \cite{gcmc23}, \cite{ko02}, \cite{wg12}. Although past research demonstrated great results, additional solutions are necessary to be developed for the inevitable system uncertainties since sensor errors, robot internal structures, and, loads cause serious system uncertainties in practice. The adaptive tracking control technique is a proper solution for its ability to deal with system uncertainties.

Adaptive control has been a research hot spot for decades \cite{ht18}, \cite{KKK95}, \cite{na89}, \cite{st20}, \cite{t03}, \cite{zt22}. Adaptive state tracking control in \cite{mt98} and \cite{tjm01}, as one important adaptive control technique, is a candidate answer for the above multi-mobile-robot control problem. It offers a controller applying feedback structures and stable adaptive laws for the closed-loop system states to track the states from a selected reference model system with the existence of the system uncertainties. However, the existing adaptive state tracking control research usually focuses on continuous-time systems. In fact, in the robot control problems, it is more economical to implement discrete-time controllers because the computation burden will be considerably heavy when we decrease the sampling time to a small enough value to implement continuous-time controllers. Thus, the discrete-time adaptive state tracking controller is needed for such a multi-mobile-robot control problem. 

Some discrete-time tracking control related research results have been available now.  In \cite{DTST4intro1c01}, an adaptive discrete robust adaptive quasi-sliding-mode tracking control design is proposed for systems with unknown parameters, unmodeled dynamics, and bounded disturbances. In \cite{t03}, the adaptive state tracking control designs are introduced by an example of a single-input single-output (SISO) system case. In the example, gradient method based adaptive laws and state feedback control schemes, ensuring the tracking properties and the convergence of the parameter estimates, are demonstrated. In \cite{DTST4intro1c03}, an adaptive tracking control scheme based on reinforcement learning algorithms for discrete-time multi-input multi-output (MIMO) nonlinear discrete-time systems is presented, which needs two networks to generate the input and to monitor system performance respectively. To specifically ensure the desired
state tracking properties, \cite{T_DTASTC} starts the study of the discrete-time adaptive state tracking control problem by using gradient algorithms. 

This paper aims to propose a least-squares method based adaptive state tracking control algorithm for MIMO discrete-time linear time-invariant (LTI) systems, which can ensure the closed-loop system stability as well as the state tracking properties proved in \cite{T_DTASTC}, as one solution to the long-standing adaptive discrete-time state tracking control problem.  Differently, this paper focuses on using the least-squares algorithm to develop the adaptive law to estimate unknown parameters of the system for the desired tracking property by minimizing the cost function designed based on not only the incoming new data at every time step and all the historical data but also the previous parameter estimates. 

The following are the main contributions of this paper: \begin{itemize}
     \item Development of the adaptive law using the least-squares algorithm, based on the error model of the indirect adaptive control scheme for the MIMO systems with uncertainties, which can achieve both the state tracking property and the stability property.   

    \item Completion of the analysis for the state tracking performance of the adaptive control system, as well as the optimality and the stability of the adaptive law.

    \item Formulation of the specific solution to the multi-mobile-robot tracking control based on the proposed MIMO discrete-time adaptive state tracking control scheme and an additional mechanism to avoid possible collisions happening in the tracking process. 
\end{itemize}

\par The rest of this paper is organized as follows. The control problem to be solved in the paper is formulated in Section \ref{sec:TIsandBG}, where a review of the existing state tracking control schemes and the research motivation of this paper are given with corresponding related technical issues being listed. 
The new adaptive laws for the indirect adaptive control design of the discrete-time MIMO systems are derived in Section \ref{Sec:LSAL} by searching for the closed solution to minimize a quadratic cost function, which is the extension of the difference between the states of the controlled model and the reference (desired) model. Reasons why the new adaptive laws are designed for the indirect adaptive control design are also elaborated in Section \ref{Sec:LSAL}. Then, the corresponding optimality, stability and tracking properties are discussed. Section \ref{sec: RMandCA} presents the study of the robot model and proposes a collision avoidance mechanism. 
In Section \ref{sec: Simulation}, simulations on the multi-mobile-robot systems in \cite{zt23May} are demonstrated to verify the tracking performance of the proposed control algorithm.

\section{Problem Statement and Background}
\label{sec:TIsandBG}
This section formulates the long-standing adaptive discrete-time state tracking control problem and provides the research
background. In Section \ref{sec:PS}, the existing Lyapunov method based solution to the continuous-time adaptive state tacking problem and the control schemes using gradient algorithm based adaptive laws for the discrete-time case recently
developed in \cite{T_DTASTC} are reviewed in Sections \ref{Sec:continuous_solution} and \ref{sec:ETASTC} after the introduction of the adaptive state tracking
control problem to provide
some foundations for the least-squares algorithm adaptive control
schemes to be developed in this paper, whose motivation is discussed
in Section \ref{sec:MLSA}. Some related technical issues are discussed in Section
\ref{sec:TIs}, including the development of adaptive control schemes with certain
optimality and stability and the need of such schemes for a
multi-mobile-robot tracking control system which is modeled as a
discrete-time system.

\subsection{Discrete-Time Adaptive State Tracking Control Problem}
\label{sec:PS}
The continuous-time adaptive state tracking problem has been solved in
the literature by a Lyapunov-type adaptive control method which
however has not been successfully applied to discrete-time systems,
and the discrete-time adaptive state tracking control problem has been remained open. 

\medskip
{\textbf{Plant model description}}.
A discrete-time MIMO time-invariant plant is described as
\begin{equation}
    x(t+1) = Ax(t)+Bu(t),\, x(t) \in \mathbb{R}^n,\, u(t)\in \mathbb{R}^m,\, m>1, \,t =0,1,\ldots,
    \label{Plant}
\end{equation}
where $A\in \mathbb{R}^{n\times n }$ and $B\in \mathbb{R}^{n\times  m}$ are unknown constant parameter matrices, $x(t)$ is the plant state (output) vector, and $u(t)$ is the  input signal. For state feedback control, the plant state vector $x(t)$ is assumed to be available for measurement and is to be used for generating $u(t)$. 
The time variables $t$ and $t+1$ in the above discrete-time plant model \eqref{Plant} represent $kT$ and $ (k+1)T, \, k =0,1,\,\dots$, for simplification, with $T $ representing the sampling interval. 

\medskip
{\textbf{Control objective}}. The control objective is to design a state feedback control signal $u(t)$ to ensure that all closed-loop system signals are bounded and the system state vector $x(t)$ asymptotically tracks a reference state vector $x_m(t)$ generated from a chosen reference model system
\begin{equation}
\begin{split}
    x_m(t+1) = A_mx_m(t)+B_m r(t),\, x_m(t) \in \mathbb{R}^n,\, r(t)\in \mathbb{R}^m,
\end{split}\label{RefModel}
\end{equation}
where $A_m\in \mathbb{R}^{n\times n }$ and $B_m \in \mathbb{R}^{n\times  m}$ are some constant matrices, and $r(t)$ is a chosen reference input signal for desired system response. To design the control input $u(t)$ to meet the control objective, we need the following assumptions:
\begin{description}
\item[] {\bf Assumption (A1)}: All eigenvalues of $A_m$ are inside the unit circle of the complex plane;

\item[] {\bf Assumption (A2)}: The reference input signal $r(t)$ is bounded; 

\item[] {\bf Assumption (A3)}: There exist a constant matrix $K_1^*\in \mathbb{R}^{n \times m}$ and a non-singular constant matrix $K_2^*\in \mathbb{R}^{m\times m}$ such that
\begin{equation}
    A+BK^{*T}_1 = A_m,\; BK_2^* = B_m;
    \label{MatchCond_MIMOIND}
\end{equation}

\item[] {\bf Assumption (A4)}: In Assumption (A3), $K_2^* = \diag \{k_{21}^*, \ldots, k_{2m}^* \}$, and $\sign[k_{2i}^*],\;i = 1,\ldots,m$, are known.
\end{description}

Assumptions (A1)-(A2) ensure the stability of the reference system. Assumption (A3) guarantees the existence of the nominal control law achieving the control objective. Assumption (A4) is used for parameter adaption and adaptive control laws. 

\subsubsection{Basic Solutions for the Continuous-Time Case}
\label{Sec:continuous_solution}

The continuous-time version of the MIMO time-invariant plant \eqref{Plant} is 
\begin{equation}
    \Dot{x}(t) = Ax(t) + Bu(t),\, x(t) \in \mathbb{R}^n,\, u(t)\in \mathbb{R}^m,\, m\geq1,
    \label{Plant_cont}
\end{equation}
where $A\in \mathbb{R}^{n\times n }$ and $B\in \mathbb{R}^{n\times  m}$ are unknown constant parameter matrices, $x(t)$ is the plant state (output) vector, and $u(t)$ is the  input signal. For state feedback control, the plant state vector $x(t)$ is assumed to be available for measurement and is to be used for generating $u(t)$. Different from the discrete-time case, $t \geq 0$ is the time variable of continuous-time systems. 

The control objective for the state tracking control problem is to design $u(t)$ to ensure closed-loop system signal boundedness and asymptotic $x(t)$ tracking the state vector $x_m(t)
\in\mathbb{R}^n$ of a continuous-time reference model system 
\begin{equation}
\begin{split}
    \dot{x}_m(t) = A_mx_m(t)+B_m r(t),\, x_m(t) \in \mathbb{R}^n,\, r(t)\in \mathbb{R}^m,
\end{split}
\label{RefModel_cont}
\end{equation}
where $A_m\in \mathbb{R}^{n\times n }$ and $B_m \in \mathbb{R}^{n\times  m}$ are some constant matrices, and $r(t)$ is chosen reference input signal for desired system response.

Based on assumptions (A1)-(A4), the state feedback control law is 
\begin{equation}
    u(t) = K_1^{T}(t) x(t) + K_2(t)r(t),
    \label{AdaptiveU}
\end{equation}
where $K_{1}(t)\in \mathbb{R}^{n\times m}$ and $K_{2}(t) \in \mathbb{R}^{m\times m}$ are updated adaptively to estimate the nominal parameter matrices $K_{1}^*$ and $K^*_{2}$ respectively. 

The basic solution for the continuous-time state tracking control problem has been summarized in \cite{T_DTASTC} including the direct adaptive control design based on \cite{t03} and the indirect control design based on \cite{na89}.

\bigskip
\textbf{Direct adaptive control design}. For the state tracking error
\begin{equation}
    e(t) = x(t)-x_m(t)
    \label{et_cont}
\end{equation}
using \eqref{Plant_cont}-\eqref{AdaptiveU}, the tracking error dynamic is derived as 
\begin{equation}
    \Dot{e}(t) = A_mx(t) + B_m\left( K_2^{*-1}\Tilde{K}_1 (t)x(t)+K_2^{*-1}\Tilde{K}_2(t)r(t) \right),
    \label{edot_cont}
\end{equation}
where $\Tilde{K}_1(t) = K_1(t)-K_1^*$ and $\Tilde{K}_2(t) = K_2(t)-K_2^*$. 

The adaptive laws for the estimates $K_1(t)$ and $K_2(t)$ are selected as 
\begin{equation}
    \dot{K}_1^T(t) = -S^T_pB_m^TPe(t)x^T(t)
\end{equation}
\begin{equation}
    \dot{K}_2(t) = -S^T_pB_m^TPe(t)r^T(t),
\end{equation}
where $P= P^T>0$ satisfying $PA_m+A^TP = -Q$ for a chosen $Q=Q^T>0$, and $S_p = \diag\{\sign[k_{21}^*],\ldots,\sign[k_{2m}^*]\}$ such that
\begin{equation}
    M_s=K_2^*S_p = M^T_s>0.
\end{equation}

The time-derivative of the positive definite function 
\begin{equation}
    V = e^TPe+\tr[\Tilde{K}_1M_s^{-1}\Tilde{K}_1^T]+\tr[\Tilde{K}_2^TM_s^{-1}\Tilde{K}_2],
\end{equation}
can be derived as $\dot{V} = -e^T(t)Qe(t)\leq 0$, from which we have that $e(t)$, $\Tilde{K}_1(t)$ and $\Tilde{K}_2(t)$ are bound and $e(t)\in L^2$, that is, $x(t)$, $K_1(t)$ and $K_2(t)$ are bounded, and so is $u(t)$, that is, all closed-loop signals are bounded. From \eqref{edot_cont}, it follows that $\dot{e}(t)$ is bounded (so that $e(t)$ is uniformly continuous), and with $e(t)\in L^2$, we have $\lim_{t\rightarrow\infty}e(t) = 0$, according to Barbalat lemma \cite{t03}. 

\bigskip
\textbf{Indirect adaptive control design}.
With the Assumption (A3), the matching condition (\ref{MatchCond_MIMOIND}) is rewritten as 
\begin{equation}
    \begin{split}
        A = A_m - B_m\Theta_1^{*T},\;
        B  = B_m\Theta^*_2,
    \end{split}
    % \label{NewMatCond_MIMO}
\end{equation}
where 
\begin{equation}
\begin{split}
        \Theta_1^* = K_1^*\left( K_2^{*-1} \right)^T\in \mathbb{R}^{m\times n },\;
        \Theta_2^* = K_2^{*-1} \in \mathbb{R}^{m\times m} 
        . 
\end{split}
\end{equation}
Then, \eqref{Plant_cont} is parameterized as 
\begin{equation}
    \Dot{x}(t) = Ax(t)+Bu(t) = A_mx(t) + B_m(\Theta^*_2u(t) - \Theta_1^{*T}x(t) ). 
\end{equation}

Denoting $\Theta_1(t)$ and $\Theta_2(t)$ as the estimates of $\Theta_1^*$ and $\Theta_2^*$, based on \eqref{Model4IndirectDes_MIMO}, the state estimator generating an estimate $\hat{x}(t)$ of the plant state $x(t)$ can be designed as 
\begin{equation}
    \dot{\hat{x}} (t) = A_m\hat{x}(t) + B_m (\Theta_2(t) u(t) - \Theta_1^T(t) x(t)).
    \label{estimator_cont}
\end{equation}

For the state estimator error 
\begin{equation}
    e_x(t) = \hat{x}(t) - x(t),
    \label{EseDef_cont}
\end{equation}
the state estimator error equation is 
\begin{equation}
\begin{split}
    \dot{e}_x(t) = &\, A_m e_x(t) + B_m\big( (\Theta_2(t) - \Theta^*_2)u(t)-     (\Theta_1(t) - \Theta^*_1)^T    x(t)\big).
\end{split}
    \label{exA_cont}
\end{equation}

The adaptive laws for $\Theta_1(t)$ and $\Theta_2(t)$ are selected as 
\begin{equation}
    \Dot{\Theta}_1(t) = \Gamma_1 x(t)e_x^T(t)PB_m 
    \label{Theta1dot}
\end{equation}
\begin{equation}
    \Dot{\Theta}_2(t) = -\Gamma_2B_m^TPe_x(t)u^T(t)+F_2(t) ,
    \label{Theta2dot}
\end{equation}
where $\Gamma_1=\Gamma_1^T>0$, $\Gamma_2>0$ is diagonal, $P=P^T>0$ satisfying $PA_m+A^T_mP = -Q$ for a chosen $Q=Q^T>0$, and $F_2(t)$ is a projection signal to be designed. 

For the positive definite function 
\begin{equation}
    V = e_x^TPe_x + \tr[(\Theta_1-\Theta_1^{*})^T\Gamma_1(\Theta_1-\Theta_1^{*})] + \tr[(\Theta_2-\Theta_2^{*})^T\Gamma_2(\Theta_2-\Theta_2^{*})],
\end{equation}
we derive its time-derivative as $\dot{V}= -e_x^TQe_x\leq0$, from which we conclude that $\Theta_1(t)$, $\Theta_2(t)$ and $e_x(t)$ are all bounded, and that $e_x(t)\in L^2$. 

\bigskip
\textbf{Control law}. 
With $\Theta_1(t)$ and $\Theta_2(t)$ updated from the adaptive laws \eqref{Theta1dot}-\eqref{Theta2dot}, the control input is designed as 
\begin{equation}
    \begin{split}
        u(t) &= {\Theta_2^{-1}(t)} \left(\Theta_1^T(t)x(t) + r(t)\right). 
        \label{uID}
    \end{split}
\end{equation}  
To ensure the control law \eqref{uID} is meaningful, $\Theta_2(t)$ is guaranteed to be nonsingular for all $t\geq0$ by using a projection signal $F_2(t)
$ based on Assumption (A5) in addition to Assumption (A4).
\begin{description}
    \item [] \textbf{Assumption (A5)}: Upper bounds $k_{2i}^b$ of $|k_{2i}^*|$: $k_{2i}^b \geq |k_{2i}^*|$, $i=1,\ldots,m$, are known.
\end{description}

The projection signal $F(t)$ is chosen to be diagonal, $F_2(t) = \diag\{f_{21}(t),\ldots,f_{2m}(t)\}$, whose diagonal elements are set as 
\begin{equation}
    f_{2i}(t) = \left\{\begin{array}{ll}
         0&  \begin{array}{l}
              \text{if } \, \sign[\theta^*_{2i}]\theta_{2i}(t)>{1}/{k_{2i}^b}, \, \text{or}  \\
              \text{if } \, \sign[\theta^*_{2i}]\theta_{2i}(t)={1}/{k_{2i}^b}\; \,\text{and}\;\, \sign[\theta^*_{2i}]g_{2i}(t)\geq0
         \end{array}\\
         -g_{2i}(t) & \;\,\text{otherwise}, 
    \end{array}\right.
    \label{ProjectionSignal}
\end{equation}
where $g_{2i}$ denotes the diagonal elements of matrix $G_2(t) = \Gamma_2B_m^TPe_x(t)u^T(t)$ (with $\Gamma_2=\Gamma_2^T>0$ being diagonal) in \eqref{Theta2dot}. The projection signal constructed by \eqref{ProjectionSignal} guarantees that $\sign[\theta_{2i}(t)] = \sign[\theta^*_{2i}]$, $|\theta_{2i}(t)|\geq 1/k_{2i}^b>0$ and $(\theta_{2i}(t)-\theta_{2i}^*)f_{2i}(t)\leq0$. 

\medskip
With the control law \eqref{uID}, the estimator equation \eqref{estimator_cont} becomes 
\begin{equation}
    \dot{\hat{x}}(t) =A_m\hat{x}(t)+B_mr(t),
\end{equation}
that is, $\hat{x}(t)$ is bounded so that $x(t) = \hat{x}(t)-e_x(t)$, $u(t)$ and $\dot{x}(t)$ are bounded, and $\lim_{t\rightarrow\infty}(\hat{x}(t)-x_m(t)) = 0$ exponentially so that $\hat{x}(t)-x_m(t)\in L^2$. Hence we have that $x(t)-x_m(t)\in L^2$, and, with $\dot{x}(t)-\dot{x}_m\in L^{\infty}$, that $\lim_{t\rightarrow\infty}{x}(t)-x_m(t) = 0$.

Both the direct adaptive control design and the indirect adaptive control design for the continuous-time state tracking control problem are based on the Lyapunov method as the positive definite function $V$ is a Lyapunov function, which contains the full error signals and ensures $\dot{V}$ always not greater than zero, of an adaptive system. Although the well-developed Lyapunov method based adaptive control designs are widely used in continuous-time systems, they have not been applied to discrete-time state tracking control problems because there is not a certain choice of adaptive laws that can ensure function $V$ is nondecreasing and positive definite for discrete-time cases. 

\subsubsection{Gradient Algorithms for Discrete-time Systems}
\label{sec:ETASTC}
For the discrete-time state tracking control problems at the beginning of Section \ref{sec:PS}, \cite{T_DTASTC} offers the gradient algorithm based direct adaptive control design and indirect adaptive control design to be solutions.

\bigskip
 \textbf{{Direct Adaptive Control Design}}. For the discrete-time MIMO plant model \eqref{Plant}: 
 \begin{equation}
     x(t+1) = Ax(t)+Bu(t),\, x(t) \in \mathbb{R}^n,\, u(t)\in \mathbb{R}^m,\, m\geq1,
 \end{equation}
the reference system \eqref{RefModel}:
\begin{equation}
    x_m(t+1) = A_mx_m(t)+B_m r(t),\, x_m(t) \in \mathbb{R}^n,\, r(t)\in \mathbb{R}^m,
\end{equation}
 and the control law \eqref{AdaptiveU}:
 \begin{equation}
    u(t) = K_1^{T}(t) x(t) + K_2(t)r(t), t = 0,1,2,\ldots,
    \label{AdaptiveU_SIMO}
\end{equation}
where $K_{1}(t)\in \mathbb{R}^{n\times m}$ and $K_{2}(t) \in \mathbb{R}^{m\times m}$ are updated adaptively to estimate the nominal parameter matrices $K_{1}^*$ and $K^*_{2}$ satisfying \eqref{MatchCond_MIMOIND} in Assumption (A3). 

With the control law \eqref{AdaptiveU_SIMO} for the plant \eqref{Plant}, the tracking error $e(t) = x(t) - x_m(t)$ can be rewritten as 
\begin{equation}
    e(t+1) = A_m e(t) + B_m K_2^{*-1}\Tilde{\Theta}^T(t)\omega(t),
    \label{e(t+1)}
\end{equation}
where 
\begin{equation}
    \omega(t) = \left[x^T(t),r^T(t)\right]^T 
\end{equation} 
\begin{equation}
    \Tilde{\Theta}(t) = \Theta(t)-\Theta^*
\end{equation}
with 
\begin{equation}
    \Theta(t) = \left[K_1^T(t),K_2(t)\right]^T,\;\Theta^*=\left[K_1^{*T},K_2^*\right]^T. 
\end{equation}
Denoting $\rho_i^* = 1/k^*_{2i},\;i = 1,\ldots,m$, and 
\begin{equation}
    \Theta(t) = \left[\theta_1(t),\ldots,\theta_m(t)\right],\;\Theta^* = \left[\theta_1^*,\ldots,\theta_m^*\right],
\end{equation}
with $W_m(z) = (zI-A_m)^{-1}B_m$, $\Tilde{\theta}_i(t) = \theta_i(t)-\theta_i^*,\,i=1,\ldots,m$, and $K_2^{-1} = \diag\left\{\rho_1^*,\ldots,\rho_m^*\right\}$, we express \eqref{e(t+1)} as 
\begin{equation}
    e(t) = W_m(z)\begin{bmatrix}
        \rho_1^*\Tilde{\theta}_1^T\omega\\
        \vdots\\
        \rho_m^*\Tilde{\theta}_m^T\omega\\
    \end{bmatrix}
   (t),
   \label{etW}
\end{equation}
which, in terms of $e(t) = \left[e_1(t),\ldots,e_n(t)\right]^T$ and $W_m(z) = [w_{ij}(z)],\,i=1,\ldots,n,\,j=1,\ldots,m$, can be further written as 
\begin{equation}
    \begin{split}
        e_i(t) = \sum_{j=1}^m\rho_j^*w_{ij}(z)\left[\Tilde{\theta}_j^T\omega\right](t).
        \label{te_para}
    \end{split}
\end{equation}

Introducing the auxiliary signals  
\begin{equation}
    \zeta_{ij}(t) = w_{ij}(z)[\omega](t) \in \mathbb{R}^{n+m}
\end{equation}
\begin{equation}
    \xi_{ij}(t) = \theta_j^T(t)\zeta_{ij}(t)-w_{ij}(z)\left[ \theta_j^T\omega\right](t)\in\mathbb{R},
\end{equation}
we define
the estimation error $\epsilon(t) = [\epsilon_1(t),\ldots,\epsilon_n(t)]$ as
\begin{equation}
    \begin{split}
        \epsilon_i(t) & = e_i(t)+ \sum_{j=1}^m\rho_j(t)\xi_{ij}(t),
    \end{split}
    \label{eps_def_DD}
\end{equation}
where $\rho_j(t), \,j = 1,\ldots,m,$ are the estimates of $\rho^*_j$.
With \eqref{te_para}, we have
\begin{equation}
  \epsilon_i(t)  = \sum_{j = 1}^m\rho_j^*\Tilde{\theta}_j(t)^T\zeta_{ij}(t) + \sum_{j = 1}^m(\rho_j(t)-\rho_j^*)\xi_{ij}(t).
        \label{parametered_eps_DD}
\end{equation}

\medskip

The discrete-time gradient-type adaptive laws for $\theta(t)$ and $\rho(t) $, which minimize the cost function $J = \frac{1}{2} \frac{\sum_{i=1}^n\epsilon_i^2}{m^2}$, are designed as
\begin{equation}
    \theta_i(t+1) = \theta_i(t)-\frac{\sign[\rho^*_i]\Gamma_i\sum_{k=1}^n\epsilon_k(t)\zeta_{ki}(t)}{m^2(t)}
    \label{thetaupdate_G}
\end{equation}
\begin{equation}
    \rho_i(t+1) = \rho_i(t) -\frac{\gamma_i \sum_{k=1}^n\epsilon_k(t)\xi_{ki}(t)}{m^2(t)},
\end{equation}
where $0<\Gamma_i = \Gamma_i^T|\rho^*_i|<2 I$, $0<\gamma_i<2$, $i = 1,\ldots,m$, and
\begin{equation}
    m(t) = \sqrt{1+ \sum_{i=1}^n\sum_{j=1}^m\left(\zeta_{ij}^T(t)\zeta_{ij}(t) + \xi_{ij}^2(t)\right)}.
\end{equation}

\medskip
To implement the adaptive law \eqref{thetaupdate_G}, we need the following assumption in addition to Assumption (A4):  
\begin{description}
    \item[ ] \textbf{Assumption (A6)} Lower bounds $k^a_{2i}>0$ of $|k^*_{2i}|$: $|k_{2i}^*|\geq k^a_{2i},\; i = 1,\ldots,m$, are known. 
\end{description}

With Assumption (A6), the value of the gain matrix $\Gamma_i$, $0<\Gamma_i = \Gamma_i^T|\rho_i^*|<2 I$, can be specified as $0<\Gamma_i = \Gamma_i^T<k_{2i}^a I$, given that $\rho_i^* = \frac{1}{k_{2i}^*}$.

\bigskip
\textbf{Indirect adaptive control design}. 
With the Assumption (M3), the matching condition (\ref{MatchCond_MIMOIND}) is rewritten as 
\begin{equation}
    \begin{split}
        A = A_m - B_m\Theta_1^{*T},\;
        B  = B_m\Theta^*_2,
    \end{split}
    \label{NewMatCond_MIMO}
\end{equation}
where 
\begin{equation}
\begin{split}
        \Theta_1^* = K_1^{*}(K_2^{*-1})^T \in \mathbb{R}^{ n \times m},\;
        \Theta_2^* = K_2^{*-1} \in \mathbb{R}^{m\times m} 
        . 
\end{split}
\end{equation}
Equation \eqref{NewMatCond_MIMO} expresses the parameter uncertainties of $A$ and $B$ in the plant \eqref{Plant} in terms of the uncertainties of $\Theta_1^*$ and $\Theta_2^*$. Thus, the parametrized plant is 
\begin{equation}
    x(t+1) = Ax(t)+Bu(t) = A_m x(t) + B_m(\Theta_2^*u(t) - \Theta_1^{*T}x(t)),
    \label{Model4IndirectDes_MIMO}
\end{equation}
which can be used to design an adaptive parameter estimation scheme.

Denoting $\Theta_1(t)$ and $\Theta_2(t)$ as the estimates of $\Theta_1^*$ and $\Theta_2^*$, based on \eqref{Model4IndirectDes_MIMO}, the state estimator generating an estimate $\hat{x}(t)$ of the plant state $x(t)$ can be designed as 
\begin{equation}
    \hat{x} (t+1) = A_m\hat{x}(t) + B_m (\Theta_2(t) u(t) - \Theta_1^T(t) x(t)).
    \label{estimatorequatin}
\end{equation}
With $\Theta_1(t)$ and $\Theta_2(t)$ generated from adaptive laws to be specified, the control input is designed as 
\begin{equation}
    \begin{split}
        u(t) &= {\Theta_2^{-1}(t)} \left(\Theta_1^T(t)x(t) + r(t)\right). 
        \label{uID_new}
    \end{split}
\end{equation}  

For the state estimator error 
\begin{equation}
    e_x(t) = \hat{x}(t) - x(t),
    \label{EseDef}
\end{equation}
the state estimator error equation is 
\begin{equation}
\begin{split}
    e_x(t+1) = &\, A_m e_x(t) + B_m\big( (\Theta_2(t) - \Theta^*_2)u(t)-     (\Theta_1(t) - \Theta^*_1)^T    x(t)\big).
\end{split}
    \label{exA}
\end{equation}

Introducing the parameter matrices 
\begin{equation}
\begin{split}
    \Theta(t) &= [\Theta_1^T(t), \Theta_2(t)]^T= [\theta_1(t),
    \ldots, \theta_{m}(t)] \in \mathbb{R}^{(n+m ) \times m },
\end{split}
\label{ThetaDefIn}
\end{equation}
\begin{equation}
\begin{split}
    \Theta^* &= [\Theta_1^{*T}, \Theta_2^*]^T = [\theta^*_1,
    \ldots, \theta_{m}^*] \in \mathbb{R}^{(n+m ) \times m },
\end{split}
\label{Theta_star_intro}
\end{equation}
\begin{equation}
    \begin{split}
    \Tilde{\Theta}(t) &= \Theta(t) - \Theta^*= [\Tilde{\theta}_1(t),
    \ldots, \Tilde{\theta}_{m}(t)] \in \mathbb{R}^{(n+m ) \times m },
\end{split}
\end{equation}
and the vector signal 
\begin{equation}
    \omega(t) = [-x^T(t), u^T(t)]^T\in \mathbb{R}^{n+m},
\end{equation}
 we can rewrite \eqref{exA} as 
\begin{equation}
    e_x(t+1) =  A_m e_x(t) + B_m\big( \Tilde{\Theta}^T(t)\omega(t)\big). 
\end{equation}

With the $n \times m$ transfer matrix 
\begin{equation} 
\begin{split}
    W_m(z) & = (zI-A_m)^{-1}B_m,
\end{split}
\label{WmDef}
\end{equation}
we obtain 
\begin{equation}
    e_x(t) = W_m(z)\begin{bmatrix}
        \Tilde{\theta}_1^T\omega\\
        \vdots\\
        \Tilde{\theta}_m^T\omega
    \end{bmatrix}(t),
\end{equation}
which, with $e_x( t) = [e_{x1}(t),
\ldots,e_{xn}(t) ]\in \mathbb{R}^n$, is equivalent to 
\begin{equation}
\begin{split}
    e_{xi}(t) =& \sum_{j=1}^m w_{ij}(z)\left[\Tilde{\theta}_j^T\omega\right](t) ,
\end{split}
    \label{exEleID_MIMO}
\end{equation}
where $w_{ij}(z), i = 1,\ldots,n, j = 1, \ldots,m,$ is the $ij$th component of the transfer matrix $W_m(z)$. 

With the auxiliary signals 
\begin{equation}
    \xi_{ij}(t)= \theta_j^T(t)\zeta_{ij}(t) - w_{ij}(z) [\theta_j^T\omega](t),
    \label{XiID_MIMO}
\end{equation}
\begin{equation}
    \zeta_{ij}(t) = w_{ij} (z)[\omega](t) ,
    \label{ZetaID_MIMO}
\end{equation}
the estimation error is defined as 
\begin{equation}
    \epsilon_i(t) = e_{xi}(t) + \sum_{j=1}^m \xi_{ij}(t).
    \label{EstErrorID_MIMO}
\end{equation}

Substituting \eqref{exEleID_MIMO}-\eqref{ZetaID_MIMO} to \eqref{EstErrorID_MIMO}, we have 
\begin{equation}
\begin{split}
        \epsilon_i (t) &=\sum_{j=1}^m \Tilde{\theta}_{j}^T(t)\zeta_{ij}(t). \\
\end{split}
\label{parametered_eps_ID}
\end{equation}

\medskip
The adaptive laws are chosen as 
\begin{equation}
    \theta_i(t+1) = \theta_i(t)-\frac{\Gamma_i\sum_{k=1}^n\epsilon_k(t)\zeta_{ki}(t)}{m^2(t)},
    \label{ThetaUpdateG}
\end{equation}
where $0<\Gamma_i = \Gamma_i^T<2I$, $i = 1,\ldots,m$, and
\begin{equation}
    m(t) = \sqrt{1+ \sum_{i=1}^n\sum_{j=1}^m\left(\zeta_{ij}^T(t)\zeta_{ij}(t) + \xi_{ij}^2(t)\right)}.
\end{equation}

\medskip
It is crucial that the parameter matrix $\Theta_2(t)$ is nonsingular,
in order to implement the indirect adaptive control law \eqref{uID_new} with
$\Theta_2^{-1}(t)$ bounded. This can be achieved by parameter projection similar to the indirect adaptive control design for continuous-time systems introduced in Section \ref{Sec:continuous_solution},
using Assumption (A4) that $\Theta_2^* = K_2^*$ is diagonal to set
$\Theta_2(t)$ to be diagonal, using $\Gamma_i = \diag\{\Gamma_{i1},
\Gamma_{i2}\}$ with $\Gamma_{i2} \in R^{m \times m}$ being diagonal,
and using parameter projection on the diagonal components
$\Theta_{2i}(t)$ of $\Theta_2(t)$ to make $\sign[\Theta_{2i}(t)] =
\sign[k_{2i}^*]$ and $|\Theta_{2i}(t)| \geq
1/k_{2i}^b > 0$ (see Assumption (A5)). The detailed projection signal, that ensures the diagonal components of $\Theta_2(t)$ away from $0$, is similar to the signal $F_2(t)$ in \eqref{Theta2dot} whose components are designed as in \eqref{ProjectionSignal}.

\medskip
The parameter projection and stability analysis of both the direct adaptive control design and the indirect adaptive control design are detailed in \cite{T_DTASTC}.

\medskip
\subsubsection{\textbf{Motivation of the Least-Squares Algorithm}}
\label{sec:MLSA}

Besides the gradient algorithm based adaptive laws, there is another algorithm to derive the adaptive laws that ensure the system stability and achieve tracking objectives.

According to \cite{bv04}, the least-squares algorithm is a long-established and widely used estimation algorithm that can minimize the sum of the squared cost function. 
A comprehensive study of least-squares algorithms for
parameter estimation is given in \cite{gs84}. The classical least-squares algorithm is computationally efficient and easy to implement since it can be solved analytically using a closed-form solution that does not require iterative optimization.

In \cite{t03} and \cite{tz23May}, some modifications to the classical least-squares algorithm are done to make use of the incoming new data and previous parameter estimates iteratively by selecting a superposition of errors at different time instants as the cost function to update the estimates of unknown parameters. 
For example, the cost function for the estimation error $\epsilon(t) = \theta^T(t) \zeta(t) - y(t)$, where $\theta(t)$, $\zeta(t)$ and $y(t) = \theta^{*T}\zeta(t)$ are the unknown parameter vector, the measured vector signal, and the measured output respectively, is selected as $J = \frac{1}{2}\sum_{\tau=1}^t \left(\theta^T(t) \zeta(\tau) - y(\tau)\right)^2   $ rather than the simple quadratic form of estimation error $J = \frac{1}{2}\epsilon^2(t)$.

The modified algorithm becomes a batch-data algorithm that can update the parameter estimates by minimizing the iterative cost function, which is different from the parameter adaptive laws that apply the idea to force the estimates to update along the steep descent direction of the cost function (quadratic errors at last time instant) in \cite{T_DTASTC}. The modified algorithm is applicable to update the estimates of the unknown parameters in adaptive control problems. 

After analyzing the existing research about the adaptive state tracking problems, we find the use of least-squares algorithms for adaptive control is less often seen in the literature than the use of gradient algorithms.
This motivates our new research on developing adaptive state tracking control schemes using least-squares algorithms, with application for the multi-mobile-robot control problems, which can also ensure the closed-loop system stability and the desired tracking performance.

\setcounter{equation}{0}
\subsection{Technical Issues}
\label{sec:TIs}
A new gradient-type adaptive control scheme is recently developed in
\cite{T_DTASTC} to solve the discrete-time adaptive state tracking control
problem. A new least-squares type adaptive control scheme is developed
in this paper to also solve the discrete-time adaptive state tracking
control problem.
Specifically, we solve two new technical issues in this paper for discrete-time adaptive state tracking control. The details are illustrated as follows. 

\medskip
\textbf{\textit{Issue I: Discrete-time adaptive controllers using the least-squares algorithm}}. In real-life scenarios, we usually need to care about all the system states rather than the system outputs only. For example, it is necessary to ensure both the robot's position and velocity, which are the states of a robot system, to track desired positions and velocity trajectories if we want the robot to move in the desired path. For the adaptive state tracking control problems introduced at the beginning of this section, our control objective is to design an adaptive control $u(t)$ such that all the states of the unknown plant $x(t+1)  = Ax(t) + Bu(t)$ can track states of the known reference model $x_m(t+1) = A_mx_m(t) + B_m r(t)$, i.e. $\lim_{t\rightarrow\infty} \left(x(t)-x_m(t)\right) =0$, and ensure closed-loop stability. 

We choose the least-squares algorithm to design the unknown parameter adaptive law for the control input $u(t)$ as its potential to minimize the accumulated estimation error. Since the least-squares algorithm is different from the gradient algorithm, the introduced estimation errors (\eqref{parametered_eps_DD} or \eqref{parametered_eps_ID}) of two adaptive control designs need to be reconsidered to find out which adaptive control design is possible to design a least-squares algorithm based adaptive laws for the unknown parameters $\Theta^*$. The corresponding iterative adaptive laws should be derived by making the updated parameters guarantee the cost function, which is extended from accumulated estimation errors, is minimized at every iteration. Lastly, the optimality of the proposed adaptive laws and the overall tracking performance need to be proved theoretically.

\medskip
\textbf{\textit{Issue II: Discrete-time adaptive controller applied on a multi-mobile-robot system}}. For multi-mobile-robot control problems, we expect robots to move as desired paths without any collisions. With the developed adaptive state tracking control scheme using the least-squares algorithm for MIMO systems, the multi-mobile-robot control problems with parameter uncertainties should solved partially when the robot position and velocity are considered as system states and the reference model states are used to depict positions and velocities of the desired paths.

However, the individual adaptive state tracking control cannot guarantee that collision will not happen even though the reference models for different robots are selected collision-free as the adaptive control scheme needs some time to achieve the tracking objective. To eradicate the possible collisions in this process, an additional collision avoidance mechanism is necessary to develop.  

A simulation study on a 3-robot system in \cite{zt23May}, which is simplified from a multi-vehicle model of autonomous driving problems, is conducted to verify the performance of the proposed control structures and adaptive laws. The adaptive state tracking control problem is a basic part of the autonomous driving problems that ensure the controlled vehicles move as the designed trajectories.  

The simulation results also prove that the combination of the proposed adaptive control scheme and the collision mechanism can achieve the control objectives.

\setcounter{equation}{0}
\section{Least-Squares Adaptive Control Algorithm}
\label{Sec:LSAL}
{This section offers the solution to the first technical issue: an adaptive control scheme based on the least-squares algorithm is proposed for the long-standing discrete-time adaptive state tracking control problem based on the estimation error derived in \cite{T_DTASTC}. In Section \ref{Sec:ACDS}, the reason why indirect adaptive control design is suitable to implement the least-squares algorithm based adaptive laws is discussed. In Section \ref{Sec: PEEE}, the estimation error $\epsilon(t)$ introduced in Section \ref{sec:ETASTC}, is parameterized further for the development of the adaptive laws using the least-squares algorithm. The cost function selection, the new adaptive laws themselves, the adaptive law optimality analysis and the overall system tracking performance evaluation are shown in Section \ref{Sec:ALDA}.

 }

\subsection{Cost Function Selection for Adaptive Law Design}
\label{Sec:ACDS}
It is necessary to study the difference between the least-squares algorithm and the gradient algorithm to identify whether there are some additional requirements while implementing the least-squares algorithm to adapt the parameter estimates in the state tracking control design.

\medskip
\textbf{Cost function for the gradient algorithm.} 
Recalling from \eqref{parametered_eps_ID}, we have the estimation error expressions
\begin{equation}
        \epsilon_i(t) = \big(\theta_{1}(t)-\theta_{1}^*\big)^T \zeta_{i1}(t)   + \cdots +  \big(\theta_{l}(t)-\theta_{l}^*\big)^T \zeta_{il}(t), \, i = 1,\ldots,n,
        \label{estimationerrorgradient}
    \end{equation}
which can be written in the compact form
\begin{equation}
\begin{split}
    \epsilon_i(t) = \big(\theta(t)-\theta^*\big)^T \zeta_i(t) = \theta(t)\zeta_i(t) - y_i(t), \, i = 1,\ldots,n,
\end{split}
    \label{estimationerrorgeneral}
\end{equation} 
with $\theta(t)\in \mathbb{R}^{n_{\theta}}$ being the estimate of the unknown parameter $\theta^*\in \mathbb{R}^{n_{\theta}}$. In \eqref{estimationerrorgeneral}, vector signals $y_i(t) = \theta^*\zeta_i(t) $ and $\zeta_i(t)$ are real-time updated measurable signals. 

With $\epsilon(t) = [\epsilon_1(t),  \epsilon_2(t), \ldots,
\epsilon_n(t)]^T$, a normalized quadratic cost function of the parameter estimate $\theta(t)  $ is selected as
\begin{equation}
    J(\theta) = \frac{\epsilon^T(t)\epsilon(t)}{2m^2(t)},
    \label{CostGradient}
\end{equation}
where $m(t)$ is a so-called normalizing signal that does not depend on the parameter estimates $\theta(t)$ and ensures the boundedness of $\frac{\zeta_{i}\zeta^T_{i}}{m^2},\, i = 1,\ldots,n$. Then, the gradient-type iterative adaptive laws for the parameter estimates $\theta(t)$ are deduced by updating them in the steepest descent direction 
    $-\frac{\partial J}{\partial \theta}$ of $J(\theta)$
to minimize the cost function iteratively.

\medskip
\textbf{Cost function and optimality for the least-squares algorithm.}
To minimize the same estimation error, which has the form of \eqref{estimationerrorgeneral}, the least-square algorithm needs to set its cost function to be the normalized squared of a modified version of the accumulated estimation error
with a penalty of the initial estimate $\theta_0$:
\begin{equation}
    J(\theta) =\frac{1}{2} \sum_{\tau=0}^{t-1} \frac{1}{\kappa}{\sum_{i=1}^n\left(\theta^T(t) \zeta_i(\tau) - y_i(\tau)\right)^2 } + \frac{1}{2}\big(\theta(t)-\theta_0\big)^T P_0\big(\theta(t)-\theta_0\big), \kappa> 0 , P_0=P_0^T>0,
    \label{Cost_GLS}
\end{equation}
where the term $\frac{1}{2}\sum_{\tau=0}^{t-1} \frac{1}{\kappa}{\sum_{i=1}^n\left(\theta^T(t) \zeta_i(\tau) - y_i(\tau)\right)^2 } $ is a modified version of the accumulated estimation error norm over a signal measurement interval $\tau = 0, 1,\ldots ,t-1$:
\begin{equation}
   \sum_{\tau=0}^{t-1} \sum_{i=1}^n \left( \theta^T(\tau) \zeta_i(\tau) - y_i(\tau) \right)^2 (\tau) =  \sum_{\tau=0}^{t-1} \sum_{i=1}^n \epsilon_i^2 (\tau) =  \sum_{\tau=0}^{t-1} \epsilon^T (\tau) \epsilon(\tau), 
\end{equation}
with $\zeta_i(\tau)$ and $y_i(\tau)$ being the measured signals.

The optimal parameter $\theta(t)$, minimizing the cost function $J(\theta)$, is found by solving the equation 
\begin{equation}
    \frac{\partial J}{\partial \theta} = 0.
\end{equation}

After examining the estimation errors of the direct adaptive control design and the indirect adaptive control design in \cite{T_DTASTC}, we see that the estimation error in the indirect adaptive control design is suitable for developing the least-squares based adaptive laws for estimating the unknown parameters. The reason is that the estimation error $\epsilon(t)$ of the indirect adaptive control design has the same form as in \eqref{estimationerrorgeneral}, but the estimation error $\epsilon(t)$ of direct adaptive control does not have that form as the existence of $\rho_j(t)$ and $\rho^*_j$ $, j = 1,\ldots,m$. Moreover, the iterative solution such that the derivative of the cost function equals 0 will always contain unknown part $\rho_j^*$ while subsisting the estimation error of the direct adaptive design \eqref{parametered_eps_DD} into the cost function \eqref{Cost_GLS}.

\subsection{Parameterized Estimation Error Equation}
\label{Sec: PEEE}
In this subsection, the parameterized estimation error for systems with diagonal $K_2^*$ matrices of indirect adaptive control design are studied, which are crucial for developing the least-squares algorithm based adaptive laws for the control problems detailed in Section \ref{sec:PS} with Assumptions (A1)-(A5).

\bigskip
\textbf{Parameterized estimation error vector for systems with the diagonal $K_2^*$}. According to Assumption (M4), $\Theta_2(t)$ should be diagonal. Thus, we only need to estimate the diagonal elements of $\Theta_2(t)$ to reduce the computation burden. For this case, we present a special reduced-order adaptive controller for systems with the unknown diagonal $K_2^*$. 

 At first, we set $\Theta_1(t)$ and $\Theta_2(t) = \diag\{ \theta_{21}(t), \ldots \theta_{2m}(t) \}$ to be the estimates of the unknown parameter matrices $\Theta_1^*$ and $\Theta_2^*$. Based on \eqref{exA} and \eqref{WmDef}, the state estimation error for the system with the diagonal $K_2^*$ has the form of 
\begin{equation}
    e_x (t) = W_m(z)[(\Theta-\Theta^*)^T\omega](t) = W_m(z) \begin{bmatrix}
        \big({{\theta}}_1(t) - {\theta}_1^* \big)^T {\omega}_1(t)\\
        \vdots\\
        \big({{\theta}}_m(t) - {\theta}_m^* \big)^T {\omega}_m(t)
    \end{bmatrix} ,
\end{equation}
where vector ${\theta}_i(t)$ is the $i$th column of  
\begin{equation}
    {\Theta}(t) = \left[\Theta_1^T(t), \begin{bmatrix}
        \theta_{21}(t)\\
        \vdots\\
        \theta_{2m}(t)
    \end{bmatrix}\right]^T = [ {\theta}_1(t),\ldots,{\theta}_m(t)] \in \mathbb{R}^{(n+1) \times m},
    \label{BThetaDef}
\end{equation}
vector ${\theta}_i^*$ is the $i$th column of  
\begin{equation}
    {\Theta}^* = \left[\Theta_1^{*T}, \begin{bmatrix}
        \theta_{21}^*\\
        \vdots\\
        \theta_{2m}^*
    \end{bmatrix}\right]^T = [ {\theta}_1^*,\ldots,{\theta}_m^*]\in \mathbb{R}^{(n+1) \times m},
\end{equation}
and the signal ${\omega_i}(t)$ is defined as 
\begin{equation}
    {\omega_i}(t) = [-x^T(t),u_i(t)]^T\in \mathbb{R}^{n+1 }, i = 1,\ldots,m,
    \label{Def_Omegai}
\end{equation}
where $u_i(t)$ denotes the $i$th component of the input signal $u(t)$.

For the element of the state estimation error from $e_x( t) = [e_{x1}(t),\ldots,e_{xn}(t) ]\in \mathbb{R}^n$, we have 
\begin{equation}
    e_{xi}(t) = \sum_{j=1}^m  w_{ij}(z)\left[({\theta}_j-{\theta}^*_j)^T{\omega}_j\right](t), 
    \label{exEleID_MIMO_Diag}
\end{equation}
where $w_{ij}, i = 1,\ldots,n, j = 1, \ldots,m,$ is the $ij$th component of the transfer matrix $W_m(z)$.

With ${\theta}_j(t)$, ${\theta}^*_j$ and ${\omega}(t)$, the auxiliary signals \eqref{XiID_MIMO} and \eqref{ZetaID_MIMO} can be rewritten as 
\begin{equation}
    {\xi}_{ij}(t) = {\theta}_j^T(t) {\zeta}_{ij}(t) -w_{ij}(z) [{\theta}_j^T{\omega}_j](t),\label{XiID_MIMO_Diag}
\end{equation}
\begin{equation}
    {\zeta}_{ij}(t) = w_{ij}(z) [{\omega}_j](t). \label{ZetaID_MIMO_Diag}
\end{equation}
The signal $\xi_{ij}(t)$ has the desired property that $\xi_{ij}(t) = 0 $ for ${\theta}_j(t) = {\theta}_j$ being constant. However, the estimation error $\epsilon_i(t)$ depends on ${\theta}_j(t), j = 1,\ldots,m$.

We apply the same estimation error of the gradient algorithm based adaptive control scheme defined in \eqref{EstErrorID_MIMO} for the least-squares algorithm based adaptive control scheme, whose component $\epsilon_i(t)$ can also be expressed as 
\begin{equation}
\begin{split}
    \epsilon_i (t)  = e_{xi}(t) + {\theta}^T(t){\zeta}_i(t) - {\nu}_i(t),   
\end{split}
    \label{Eps_i4algorID_Diag}
\end{equation}
where 
\begin{equation}
    {\theta}(t) = [{\theta}_1^T(t),\ldots,{\theta}^T_m(t)]^T\in \mathbb{R}^{m(n+1)},
\label{BthetaDef}
\end{equation}
\begin{equation}
    {\theta}^* = [{\theta}^{*T}_1,\ldots,{\theta}^{*T}_m]^T\in \mathbb{R}^{m(n+1)},
    \label{Btheta_def}
\end{equation}
\begin{equation}
    {\zeta}_i(t) = [{\zeta}_{i1}^T(t),\ldots,{\zeta}_{im}^T(t)]^T\in \mathbb{R}^{m(n+1)},
\end{equation}
\begin{equation}
    {\nu}_i(t) = \sum_{j=1}^m w_{ij}(z)[{\theta}^T_j{\omega}_j](t) . 
\end{equation}

Substituting \eqref{exEleID_MIMO_Diag} - \eqref{ZetaID_MIMO_Diag} to the estimation error \eqref{Eps_i4algorID_Diag}, we also have 
\begin{equation}
\begin{split}
        {\epsilon}_i (t) &= 
         \big( {\theta}(t) - {\theta}^* \big)^T{\zeta}_i(t).
\end{split}
\label{Eps_i4Proof_diag}
\end{equation}

From \eqref{Eps_i4algorID_Diag}, in the vector form, we can express $\epsilon(t) =      
[\epsilon_1(t), \ldots, \epsilon_n(t)]^T$ as
\beq
\epsilon(t) = {\mu}(t) + {Z}^{T}(t) {\theta}(t),
\eeq
where ${\mu}(t) = [{\mu}_1(t), \ldots, {\mu}_n(t)]^T$ with
\beq
{\mu}_i(t) = e_{xi}(t) - {\nu}_{i}(t) ,
\label{muDefineID_Diag}
\eeq
and matrix ${Z}(t)$ is defined as 
\begin{equation}
    {Z}(t) =  [{\zeta}_1(t),\ldots,{\zeta}_n(t)] \in \mathbb{R}^{m(n+1)\times n}. 
    \label{ZDefineID_Diag}
\end{equation}

\subsection{Adaptive Law and Its Properties}
\label{Sec:ALDA}
In this subsection, the adaptive law using the least-squares algorithm to minimize the accumulated estimation error is developed. The adaptive law optimality and the system state tracking performance are analyzed.

\medskip
\textbf{Cost function}. Based on the the estimation error \eqref{Eps_i4algorID_Diag} with the signal $\mu_i(t)$ defined in \eqref{muDefineID_Diag}, the measured signal $y_i(t)$ in the cost function \eqref{Cost_GLS} is $-\mu_i(t)$ in this case. Thus, the cost function for the least-squares algorithm based adaptive control scheme is 
\begin{equation}
\begin{split}
    J({\theta}) 
    = & \; \frac{1}{2}\sum_{\tau=t_0}^{t-1} \frac{1}{\kappa}\sum_{i=1}^n({\mu}_i(\tau) + {\theta}^T(t){\zeta}_i(\tau) )^2 +\frac{1}{2}({\theta}(t)-{\theta}_0)^TP_0^{-1}({\theta}(t)-{\theta}_0),
\end{split}
\label{CostFuncMIMO}
\end{equation}
where $P_0=P_0^T>0$ and $\kappa>0$, which is a combination of a sum of squared estimation errors at many time instants with a penalty on the initial estimate ${\theta}(0) = {\theta}_0$ of ${\theta}^*$.

\medskip
\textbf{Adaptive law}. The adaptive law of the parameter estimate ${\theta}(t)$ is derived as  
\beq
\label{theta(t+1)_resID_Diag}
{\theta}(t+1) = {\theta}(t) - P(t-1){Z}(t)N^{-1}(t)  \epsilon(t),
\eeq
where
\begin{equation}
    N(t)= \kappa I + {Z}^T(t) P(t-1)
{Z}(t),
\label{mDefinition}
\end{equation}
and $P(t)$ is recursively calculated from 
\beq
\label{P(t+1)_resID_diag} 
P(t) = P(t-1) - P(t-1) {Z}(t)N^{-1}(t)  {Z}^T(t)
P(t-1),
\eeq
with ${\theta}(0) ={\theta}_0$ and $P(-1) = P_0 = P_0^T > 0$ chosen.

\medskip
{\textbf{Adaptive law optimality}}. The optimality of the adaptive law \eqref{theta(t+1)_resID_Diag} is depicted in the following property.
\begin{lem}
    The adaptive law \eqref{theta(t+1)_resID_Diag} minimizes the cost function \eqref{CostFuncMIMO}, the accumulated estimation error with a penalty on the initial estimate $\theta_0$, at every time instant $t$.
\end{lem}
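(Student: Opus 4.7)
The plan is to verify the lemma by direct calculation: since the cost function $J(\theta)$ in \eqref{CostFuncMIMO} is a strictly convex quadratic in $\theta$ (its Hessian is $P_0^{-1} + \kappa^{-1}\sum_{\tau} Z(\tau) Z^T(\tau) \succ 0$), it admits a unique minimizer obtained from the first-order condition $\partial J/\partial \theta = 0$. I would first compute this closed-form ``batch'' minimizer, and then show that the recursive update \eqref{theta(t+1)_resID_Diag}--\eqref{P(t+1)_resID_diag} reproduces exactly the same value at every step, so that the output of the adaptive law coincides with the argmin of $J(\theta)$ over time.

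Carrying out the first-order condition: writing $J$ in matrix form using $\epsilon(\tau) = \mu(\tau) + Z^T(\tau)\theta$, differentiation gives
\begin{equation}
\Bigl(P_0^{-1} + \tfrac{1}{\kappa}\sum_{\tau=t_0}^{t} Z(\tau) Z^T(\tau)\Bigr)\,\theta \;=\; P_0^{-1}\theta_0 - \tfrac{1}{\kappa}\sum_{\tau=t_0}^{t} Z(\tau)\mu(\tau).
\end{equation}
I would denote the coefficient matrix by $P^{-1}(t)$, which by construction satisfies the one-step identity $P^{-1}(t) = P^{-1}(t-1) + \kappa^{-1} Z(t) Z^T(t)$, so that the batch minimizer is $\theta_{\star}(t+1) = P(t)\,b(t)$ with $b(t) = b(t-1) - \kappa^{-1} Z(t)\mu(t)$. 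Since $\theta_{\star}(t) = P(t-1)\,b(t-1)$, this gives the batch-to-recursive bridge $\theta_{\star}(t+1) = \theta_{\star}(t) - \kappa^{-1} P(t) Z(t)\,\epsilon(t)$, after substituting $\mu(t) = \epsilon(t) - Z^T(t)\theta_{\star}(t)$ and using the identity for $P^{-1}(t)$.

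Next I would apply the matrix inversion lemma to $P^{-1}(t) = P^{-1}(t-1) + \kappa^{-1} Z(t) Z^T(t)$, which yields exactly the update for $P(t)$ in \eqref{P(t+1)_resID_diag} with $N(t) = \kappa I + Z^T(t) P(t-1) Z(t)$ as in \eqref{mDefinition}. A short algebraic manipulation then shows the gain identity
\begin{equation}
\tfrac{1}{\kappa} P(t) Z(t) \;=\; P(t-1) Z(t)\, N^{-1}(t),
\end{equation}
by factoring $P(t-1)Z(t)\,N^{-1}(t)[N(t) - Z^T(t) P(t-1) Z(t)] = \kappa\, P(t-1) Z(t) N^{-1}(t)$ on the right-hand side of the $P(t)$ recursion. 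Inserting this identity into the batch-to-recursive bridge recovers \eqref{theta(t+1)_resID_Diag}. Finally, I would initialize by verifying that with $P(-1) = P_0$ and $\theta(0) = \theta_0$ the scheme agrees with the minimizer of $J$ for the initial (empty) sum, so the equality $\theta(t+1) = \theta_{\star}(t+1)$ propagates by induction.

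The main obstacle will be the matrix inversion lemma step: because $Z(t) \in \mathbb{R}^{m(n+1)\times n}$ is a block, not a vector, the inversion is over the full $n\times n$ matrix $N(t)$, and care is required to verify that $N(t)$ is invertible (it is, since $N(t) \succeq \kappa I \succ 0$) and to keep the orders of matrix products straight when substituting into the gain identity. Once that identity is established the rest is purely mechanical, and the strict convexity of $J$ guarantees that the minimizer found is the unique one, so optimality holds at every time instant as claimed.
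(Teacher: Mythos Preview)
Your proposal is correct and follows essentially the same route as the paper's own proof: set $\partial J/\partial\theta=0$ to obtain the batch minimizer $\theta(t)=P(t-1)\bigl(P_0^{-1}\theta_0-\kappa^{-1}\sum_\tau Z(\tau)\mu(\tau)\bigr)$ with $P^{-1}(t-1)=P_0^{-1}+\kappa^{-1}\sum_\tau Z(\tau)Z^T(\tau)$, apply the matrix inversion lemma to get the $P(t)$ recursion, and then algebraically reduce the batch formula to the recursive update \eqref{theta(t+1)_resID_Diag} via the gain identity $\kappa^{-1}P(t)Z(t)=P(t-1)Z(t)N^{-1}(t)$. Your explicit mention of strict convexity (to justify that the first-order condition yields the unique global minimizer) and the inductive framing for matching initial conditions are minor rigor points the paper leaves implicit, but the argument is otherwise the same.
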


\begin{proof}
The closed form of the optimal solution ${\theta}(t)$ for each time instant $t$ can be found by setting 
\begin{equation}
\begin{split}
    \frac{\partial J({\theta})}{\partial {\theta}}  = \sum_{\tau=t_0}^{t-1}\frac{1}{\kappa}\sum_{i=1}^n\left({\mu}_i(\tau) + {\theta}^T(t){\zeta}_i(\tau)  \right){\zeta}_i(\tau) + P^{-1}_0\big({\theta}(t)-{\theta}_0\big) &= 0.  \\
\end{split}
    \label{GradientJRawID_Diag}
\end{equation}
With the definition of ${Z}(t)$ in \eqref{ZDefineID_Diag}, we have 
\begin{equation}
\sum_{i=1}^n{\zeta}_i(\tau){\zeta}_i^T(\tau) = {Z}(\tau){Z}^T(\tau) ,  
\label{Z=sumID_Diag}
\end{equation}
\begin{equation}
    \sum_{i=1}^n {\mu}_i(\tau){\zeta}_i(\tau) = {Z}(\tau){\mu}(\tau).
    \label{Z=summuID_Diag}
\end{equation}

According to \eqref{Z=sumID_Diag}, \eqref{GradientJRawID_Diag} can be rewritten as 
\begin{equation}
    \frac{\partial J({\theta})}{\partial {\theta}}  =\big(P^{-1}_0+\sum_{\tau = 0}^{t-1}\frac{1}{\kappa} {Z}(\tau){Z}^T(\tau) \big){\theta}(t)+\sum_{\tau = t_0}^{t-1} \frac{1}{\kappa}\sum_{i=1}^n {\mu}_i(\tau){\zeta}_i(\tau)-P_0^{-1}{\theta}_0  = 0  .
    \label{GradientJZID_Diag}
\end{equation}

Based on \eqref{GradientJZID_Diag}, ${\theta}(t)$ can be expressed as 
\begin{equation}
    {\theta}(t) = P(t-1)\left(-\sum_{\tau = t_0}^{t-1} \frac{1}{\kappa} \sum_{i=1}^n {\mu}_i(\tau){\zeta}_i(\tau)+P_0^{-1}{\theta}_0 \right),
    \label{theta_rawID_diag}
\end{equation}
with
\begin{equation}
    P(t-1)=\left(P^{-1}_0+\sum_{\tau = t_0}^{t-1}\frac{1}{\kappa} {Z}(\tau){Z}^T(\tau)\right)^{-1}.
    \label{Pk_rawID_diag}
\end{equation}

According to \eqref{Pk_rawID_diag}, the recurrent expression of $P^{-1}(t)$ is 
\begin{equation}
\begin{split}
    P^{-1}(t) &= P^{-1}(t-1)+\frac{1}{\kappa} {Z}(t){Z}^T(t).\\
\end{split}
\label{Pk^-1}
\end{equation}
Then, we have the recursive expression  \eqref{P(t+1)_resID_diag} for $P(t)$:
\begin{equation}
    P(t) = P(t-1) - {P(t-1){Z}(t)({\kappa I +{Z}^T(t)P(t-1){Z}(t)})^{-1}{Z}^T(t)P(t-1)}, 
        \label{Pk_finalID_diag}
\end{equation}
by substituting $ A = P^{-1}(t-1),\, B = \frac{1}{\kappa}Z(t),\, C = Z^T(t)$
to the equation 
\begin{equation}
    (A + B C)^{-1} = A^{-1} - A^{-1} B(I + C A^{-1} B)^{-1} C A^{-1}. 
\end{equation}

\bigskip
With \eqref{Z=summuID_Diag}, \eqref{theta_rawID_diag} and \eqref{Pk_finalID_diag}, we have 
\begin{equation}
\begin{split}
        P^{-1}(t-1) {\theta}(t) &= P_0^{-1}{\theta}_0 - \sum_{\tau = t_0}^{t-1}\frac{1}{\kappa}\sum_{i=1}^n {\mu}_i(\tau)\zeta_i(\tau) = P_0^{-1}{\theta}_0 - \sum_{\tau = t_0}^{t-1}\frac{1}{\kappa} {Z}(\tau){\mu}(\tau).
\end{split}
\label{P-1ThetaID_diag}
\end{equation}

After combining \eqref{Eps_i4algorID_Diag}, \eqref{ZDefineID_Diag}, \eqref{theta_rawID_diag}, and \eqref{P-1ThetaID_diag}, we obtain 
\begin{equation}
\begin{split}
{\theta}(t+1) &= P(t)\left(-\sum_{\tau = t_0}^{t} \frac{1}{\kappa} \sum_{i=1}^n {\mu}_i(\tau){\zeta}_i(\tau)+P_0^{-1}{\theta}_0 \right)    \\
&=P(t)\left( -\frac{1}{\kappa} {Z}(t){\mu}(t) + P(t-1)^{-1}{\theta}(t) \right) \\
& = P(t)\left( -\frac{1}{\kappa} {Z}(t){\mu}(t) + \left(P^{-1}(t) - \frac{1}{\kappa} {Z}(t){Z}^T(t) \right ) {\theta}(t) \right)\\
& = {\theta}(t)-\frac{1}{\kappa} P(t) {Z}(t) \left({Z}^T(t) {\theta}(t)  +{\mu}(t)\right)\\
&= {\theta}(t)-\frac{1}{\kappa} P(t){Z}(t) \epsilon(t)\\
& = {\theta}(t)- P(t-1){Z}(t)\big(\kappa I + {Z}^T(t) P(t-1)
{Z}(t)\big)^{-1} \epsilon(t),
\end{split}
\label{Theta_finalID_diag}
\end{equation}
which is the adaptive law \eqref{theta(t+1)_resID_Diag}. 
This completes Lemma's proof.
\end{proof}

\medskip
{\textbf{Adaptive law properties}}. The adaptive law \eqref{theta(t+1)_resID_Diag} has the following properties.

\begin{lem}
\label{Lemma: MIMOStabIND}
    The adaptive law \eqref{theta(t+1)_resID_Diag} guarantees that \begin{itemize}
    \item[(i)] $P(t)=P^T(t)> 0 $, $\forall t\in\{0,1,2,\ldots \}$, and $P(t)$ is bounded;
    \item[(ii)] ${\theta}(t)$, $\epsilon^T(t)N^{-1}(t)\epsilon(t)$ and $\epsilon^T(t)\Bar{N}^{-1}(t)\epsilon(t)$ are bounded, where $\Bar{N}(t) = I + {Z}^T(t){Z}(t)$; and
    \item [(iii)] $N^{-\frac{1}{2}}(t)\epsilon(t)$, ${\theta}(t+1)-{\theta}(t) $ and $\Bar{N}^{-\frac{1}{2}}(t)\epsilon(t)$  belong to $L^2$.
\end{itemize}
\end{lem}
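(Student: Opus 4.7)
The strategy mirrors the standard least–squares convergence analysis, adapted to the present parameterization. I will handle the three items in order, with the bulk of the work concentrated in item (ii), from which item (iii) follows as a corollary.

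For (i) the plan is to exploit the inverse recursion $P^{-1}(t)=P^{-1}(t-1)+(1/\kappa)Z(t)Z^T(t)$ derived in \eqref{Pk^-1}. Since $P(-1)=P_0=P_0^T>0$ and $Z(t)Z^T(t)\geq 0$ is symmetric, a straightforward induction shows that $P^{-1}(t)$ is symmetric positive definite, hence so is $P(t)$. Because the sum telescopes to $P^{-1}(t)\geq P_0^{-1}$, inverting yields $0<P(t)\leq P_0$, which delivers boundedness.

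For (ii) I introduce the Lyapunov-type functional
\begin{equation}
V(t)=\tilde{\theta}^T(t)P^{-1}(t-1)\tilde{\theta}(t),\qquad \tilde{\theta}(t)={\theta}(t)-{\theta}^*,
\end{equation}
and use the identity $\epsilon(t)=Z^T(t)\tilde{\theta}(t)$ from \eqref{Eps_i4Proof_diag}. Substituting the parameter update $\tilde{\theta}(t+1)=\tilde{\theta}(t)-P(t-1)Z(t)N^{-1}(t)\epsilon(t)$ together with $P^{-1}(t)=P^{-1}(t-1)+(1/\kappa)Z(t)Z^T(t)$, I aim to establish the clean telescoping identity
\begin{equation}
V(t+1)-V(t)=-\epsilon^T(t)N^{-1}(t)\epsilon(t)\leq 0.
\end{equation}
The key intermediate step is the simplification $Z^T(t)\tilde{\theta}(t+1)=\kappa N^{-1}(t)\epsilon(t)$, which follows from $Z^T(t)P(t-1)Z(t)=N(t)-\kappa I$ and causes the $\kappa\epsilon^T N^{-2}\epsilon$ contributions from the two pieces of $P^{-1}(t)$ to cancel exactly. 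Once the telescope is in hand, nonincreasingness of $V(t)$ together with $P^{-1}(t-1)\geq P_0^{-1}$ yields boundedness of $\tilde{\theta}(t)$ hence of ${\theta}(t)$, and summation yields $\sum_{\tau=0}^{t-1}\epsilon^T(\tau)N^{-1}(\tau)\epsilon(\tau)\leq V(0)$, so each term is bounded. Boundedness of $\epsilon^T(t)\bar{N}^{-1}(t)\epsilon(t)$ then follows by comparison: since $0<P(t-1)\leq P_0$ implies $N(t)=\kappa I+Z^T(t)P(t-1)Z(t)\leq c\,\bar{N}(t)$ with $c=\max(\kappa,\lambda_{\max}(P_0))$, inverting the order reversal gives $\bar{N}^{-1}(t)\leq c\,N^{-1}(t)$, and the bound propagates.

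For (iii), the same summation $\sum_{\tau}\epsilon^T(\tau)N^{-1}(\tau)\epsilon(\tau)\leq V(0)<\infty$ is exactly the statement $N^{-1/2}(t)\epsilon(t)\in L^2$, and the comparison $\bar{N}^{-1}(t)\leq c\,N^{-1}(t)$ immediately yields $\bar{N}^{-1/2}(t)\epsilon(t)\in L^2$. For the parameter increment, I would use the update formula to write
\begin{equation}
\|{\theta}(t+1)-{\theta}(t)\|^2=\epsilon^T(t)N^{-1}(t)Z^T(t)P^2(t-1)Z(t)N^{-1}(t)\epsilon(t),
\end{equation}
and bound $P^2(t-1)\leq \lambda_{\max}(P_0)\,P(t-1)$ together with $Z^T(t)P(t-1)Z(t)=N(t)-\kappa I\leq N(t)$ to obtain $\|{\theta}(t+1)-{\theta}(t)\|^2\leq \lambda_{\max}(P_0)\,\epsilon^T(t)N^{-1}(t)\epsilon(t)$, which is summable by the telescoping identity.

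The principal obstacle is the algebraic verification of the telescoping identity in (ii): all three terms generated by expanding $V(t+1)$ against $P^{-1}(t-1)+(1/\kappa)Z(t)Z^T(t)$ must combine so that the $\kappa\epsilon^T N^{-2}\epsilon$ contributions cancel, leaving only $-\epsilon^T N^{-1}\epsilon$. This cancellation is precisely what encodes optimality of the least-squares update established in the preceding lemma, so the derivation is not merely a direct extension of the continuous-time or gradient-based arguments but depends on the specific structure of $N(t)$ and the recursion for $P(t)$.
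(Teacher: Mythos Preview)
Your proposal is correct and follows essentially the same Lyapunov-based route as the paper: the same $V(t)=\tilde\theta^T(t)P^{-1}(t-1)\tilde\theta(t)$, the same telescoping identity $V(t+1)-V(t)=-\epsilon^T(t)N^{-1}(t)\epsilon(t)$, and the same comparison $N(t)\leq a_0\bar N(t)$ with $a_0=\max\{\kappa,\lambda_{\max}(P_0)\}$.

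Two minor differences in execution are worth noting. For the telescoping step, the paper first records the compact identity $\tilde\theta(t+1)=P(t)P^{-1}(t-1)\tilde\theta(t)$ (obtained directly from \eqref{P(t+1)_resID_diag}), which collapses $V(t+1)-V(t)$ to $(\tilde\theta(t+1)-\tilde\theta(t))^TP^{-1}(t-1)\tilde\theta(t)=-\epsilon^TN^{-1}\epsilon$ in one line and avoids the explicit cancellation of the $\kappa\epsilon^TN^{-2}\epsilon$ terms you anticipate; you may find this shortcut simpler than the expansion you describe. Conversely, your argument for $\theta(t+1)-\theta(t)\in L^2$ is cleaner than the paper's: the paper factors $P(t-1)=P_sP_s^T$ and invokes an eigendecomposition of $P_s^TZ$, whereas your bound $Z^TP^2Z\leq\lambda_{\max}(P_0)\,Z^TPZ\leq\lambda_{\max}(P_0)\,N$ gives $\|\theta(t+1)-\theta(t)\|^2\leq\lambda_{\max}(P_0)\,\epsilon^TN^{-1}\epsilon$ directly and with no spectral machinery.
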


\begin{proof}
     (i) According to \eqref{Pk_rawID_diag} and \eqref{Pk^-1}, we know that $P^{-1}(t)$ is nondecreasing, which means $P^{-1}(t) = \big(P^{-1}(t)\big)^T\geq P^{-1}_0>0$. Thus, we have $P(t)=P^T(t)> 0 $, $\forall t\in\{0,1,2,\ldots \}$, and $P(t)$ is bounded.

\medskip

(ii) With 
    $\Tilde{{\theta}}(t) = {\theta}(t) - {\theta}^*,$
\eqref{Eps_i4Proof_diag}, \eqref{theta(t+1)_resID_Diag} and \eqref{P(t+1)_resID_diag}, we have 
\begin{equation}
\begin{split}
        \Tilde{{\theta}}(t+1) &= \Tilde{{\theta}}(t) - P(t-1){Z}(t)\big(\kappa I + {Z}^T(t) P(t-1){Z}(t)\big)^{-1}{Z}^T(t)\Tilde{{\theta}}(t) \\
        & = \left(I - P(t-1){Z}(t)\big(\kappa I + {Z}^T(t) P(t-1){Z}(t)\big)^{-1}{Z}^T(t)\right)\Tilde{{\theta}}(t)\\
        &= P(t)P^{-1}(t-1)\Tilde{{\theta}}(t) .
\end{split}
\label{ForThetaCov_diag}
\end{equation}

Consider the positive definite function 
\begin{equation}
    V(\Tilde{{\theta}},t) = \Tilde{{\theta}}^TP^{-1}(t-1)\Tilde{{\theta}}.
\end{equation}
 Then, the time increment of $V = V(\Tilde{{\theta}}(t),t)$, along \eqref{theta(t+1)_resID_Diag}, is 
\begin{equation}
    \begin{split}
        &\; V(\Tilde{{\theta}}(t+1),t+1) - V(\Tilde{{\theta}}(t),t)\\
        =&\; \Tilde{{\theta}}^T(t+1)P^{-1}(t)\Tilde{{\theta}}(t+1) - \Tilde{{\theta}}^T(t)P^{-1}(t-1)\Tilde{{\theta}}(t)\\
        =&\; (\Tilde{{\theta}}^T(t+1) - \Tilde{{\theta}}^T(t))P^{-1}(t-1)\Tilde{{\theta}}(t)\\
        =& \; -\epsilon^T(t)N^{-1}(t )\epsilon(t).
    \end{split}
    \label{DVraw_diag_p1}
\end{equation}
With the definition of $l^2$ norm of a vector signal $x$,  
\begin{equation}
    \|x\|_2 = \sqrt{x^Tx},
    \label{l2Lemma3.2}
\end{equation} 
 \eqref{DVraw_diag_p1} can be rewritten as 
\begin{equation}
\begin{split}
    &\; V(\Tilde{{\theta}}(t+1),t+1) - V(\Tilde{{\theta}}(t),t)
=- \left\|N^{-\frac{1}{2}}(t)\epsilon(t) \right\|_2^2\leq0,
\end{split}
    \label{DVraw_diag}
\end{equation}
which indicates $V(\Tilde{{\theta}},t) $ is bounded. 
With \eqref{Pk_rawID_diag}, we have 
\begin{equation}
\begin{split}
        V(\Tilde{{\theta}}(t),t) & = \Tilde{{\theta}}^T(t)P^{-1}_0\Tilde{{\theta}}(t) + \Tilde{{\theta}}^T(t)\left(\sum_{\tau = 0}^{t} \frac{1}{\kappa}{Z}(\tau){Z}^T(\tau) \right)\Tilde{{\theta}}(t)\\
        & = \Tilde{{\theta}}^T(t)P^{-1}_0\Tilde{{\theta}}(t) + \Tilde{{\theta}}^T(t)\left(\sum_{\tau = 0}^{t} \frac{1}{\kappa}\sum_{i=1}^n {\zeta}_i(\tau){\zeta}_i(\tau)^T \right)\Tilde{{\theta}}(t),
\end{split}
\end{equation}
which implies that $\Tilde{{\theta}}^T(t)P^{-1}_0\Tilde{{\theta}}(t)$ is bounded, and so are $\Tilde{{\theta}}(t)$ and ${\theta}(t)$. Since the time increment of $V(\Tilde{\theta},t)$ and $\epsilon^T(t)N^{-1}(t)\epsilon(t)$ belong to $L^1$ because of \eqref{DVraw_diag_p1}, we have, for this discrete-time case, that $\epsilon^T(t) N^{-1}(t)\epsilon(t)$ is bounded.

Based on $P_0 \geq P(t) = P^{T}(t)>0$ proved in (i), we have 
\begin{equation}
  \kappa I + {Z}^T(t) P(t-1)
{Z}(t) \leq a_0\left( I +{Z}^T(t) 
{Z}(t)  \right),
\label{NgeqaNb}
\end{equation}
where $a_0 = \max \{\kappa, \lambda_{max}(P_0)\}$. 

From \eqref{NgeqaNb} and the definitions of $N(t)$ and $\Bar{N}(t)$, $N(t)$ and $\Bar{N}(t)$ satisfy the following inequality  \begin{equation}
     N^{-1}(t) \geq \frac{1}{a_0} \Bar{N}^{-1}(t).
\end{equation}

With the boundedness of $\epsilon^T(t){N}^{-1}(t)\epsilon(t)$, the boundedness of $\epsilon^T(t)\Bar{N}^{-1}(t)\epsilon(t)$ is proved by
\begin{equation}
    \epsilon^T(t)\Bar{N}^{-1}(t)\epsilon(t) \leq {a_0}\epsilon^T(t){N}^{-1}(t)\epsilon(t) < \infty. 
    \label{boundepsNb-1eps}
\end{equation}

\medskip
(iii) With the boundedness of $V(\Tilde{\theta}(t))$ and \eqref{DVraw_diag_p1}, we have 
\begin{equation}
    \begin{split}
        \sum_{\tau = 0}^t\epsilon^T(\tau)N^{-1}(\tau)\epsilon(\tau) =\sum_{\tau = 0}^t \left\|N^{-\frac{1}{2}}(\tau)\epsilon(\tau) \right\|_2^2 = V(\Tilde{\theta}(0))- V(\Tilde{\theta}(t))\leq V(\Tilde{\theta}(0)),
    \end{split}
    \label{me_L2}
\end{equation}
that is, $N^{-\frac{1}{2}}(t)\epsilon(t) \in L^2$.

Combine \eqref{boundepsNb-1eps}, we have
\begin{equation}
    \left\|\Bar{N}^{-\frac{1}{2}}(\tau)\epsilon(\tau) \right\|_2^2 = \epsilon^T(t)\Bar{N}^{-1}(t)\epsilon(t)\leq {a_0}\epsilon^T(t){N}^{-1}(t)\epsilon(t) < \infty. 
   \label{emb_quad_trans}
\end{equation}
From \eqref{me_L2} and \eqref{emb_quad_trans}, it follows that $\Bar{N}^{-\frac{1}{2}}(t)\epsilon(t)\in L^2$.

To consider the $L^2$ property of $\theta(t+1)-\theta(t)$, we need to use the definition and some properties of the matrix norm $\vertiii{\cdot}$ that is ``induced" by a vector norm $||\cdot||$ introduced in \cite{HJ13}. 

\begin{defn} 
\label{defn: MatrixNorm}
Let $||\cdot||$ be a norm on $\mathbb{C}^n$. Define $\vertiii{\cdot}$ on $\mathbb{M}_{m\times n}$ by 
\begin{equation}
    \vertiii{A} = \max_{\|x\|=1} \|Ax\| .
\end{equation}
    
\end{defn}

\begin{prop}
\label{prop: MatrixNorm}
    The function $\vertiii{\cdot}$ defined in Definition \ref{defn: MatrixNorm} has the following properties:
    \begin{itemize}
        \item[(a)] $\vertiii{I} = 1$;
        \item[(b)] $\|Ay\|\leq \vertiii{A}\|y\|$ for any $A\in \mathbb{M}_{m\times n}$ and any $y\in \mathbb{C}^n$;
        \item[(c)]  $\vertiii{\cdot} $ is a matrix norm on $\mathbb{M}_{m\times n}$.
    \end{itemize}
\end{prop}

\begin{prop}
\label{prop: MatrixNormAxioms}
From $(c)$ of Proposition \ref{prop: MatrixNorm}, we know the function $\vertiii{\cdot}$ defined in Definition \ref{defn: MatrixNorm} satisfies the nonnegativity, positivity, homogeneity, triangle inequality, and submultiplicativity axioms. 
\end{prop}

From Proposition \ref{prop: MatrixNorm}, the $l^2$ vector norm induced matrix norm is $\vertiii{A}_2 = \max_{\|x\|_2=1} \|Ax\|_2 = \sigma_1(A) $, the largest singular value of $A$.

Then, we have the inequality about the $l^2$ norm of $\theta(t+1)-\theta(t)$: 
\begin{equation}
    \begin{split}
        & \;\|\theta(t+1)-\theta(t)\|_2\\ \leq&\; \left\|P_s(t-1)P_s^T(t-1)Z(t)\left(\kappa I + \left(P_s^T(t-1)Z(t)\right)^2\right)^{-\frac{1}{2}}\right\|_2\|{N^{-\frac{1}{2}}}(t)\epsilon(t)\|_2\\
        \leq &\; \vertiii{P_s(t-1)}_2 \|{N^{-\frac{1}{2}}}(t)\epsilon(t)\|_2\\
        &\;\vertiii{Q_{P_sZ}(t)\begin{bmatrix}
            \lambda_{P_sZ}^1(t)&&\\
            &\ddots&\\
            &&\lambda_{P_sZ}^n(t)
        \end{bmatrix}\cancel{Q^{-1}_{P_sZ}(t)}\cancel{Q_{P_sZ}(t)}
        \begin{bmatrix}
            \frac{1}{\sqrt{\kappa+\left(\lambda_{P_sZ}^1(t)\right)^2}}&&\\
            &\ddots&\\
            &&\frac{1}{\sqrt{\kappa+\left(\lambda_{P_sZ}^n(t)\right)^2}}
        \end{bmatrix}Q_{P_sZ}^{-1}(t)}_2
    \end{split}
    \label{DThetal2_temp}
\end{equation}
where $P(t-1) \triangleq P_s(t-1) P_s^T(t-1) $, $Q_{P_sZ}(t)$ is a square matrix whose columns are
the $n$ linearly independent eigenvectors of $P_s^T(t-1)Z(t)$, 
and $\lambda_{P_sZ}^i(t)$ denotes the $i$th eigenvalue of $P_s^T(t-1)Z(t)$. 
Based on the spectral decomposition,  introduced in \footnote{\url{https://en.wikipedia.org/wiki/Eigendecomposition_of_a_matrix}}, $Q_{P_sZ}(t)$ is always orthogonal such that $\vertiii{Q_{P_sZ}(t)}^2 = 1$. Then, we rewritten \eqref{DThetal2_temp} as
\begin{equation}
    \begin{split}
        \|\theta(t+1)-\theta(t)\|_2 \leq \vertiii{P_s(t-1)}_2  \max_{1\leq i \leq n}\left|\frac{\lambda_{P_sZ}^i(t)}{\sqrt{\kappa+\left(\lambda_{P_sZ}^i(t)\right)^2}}\right|\|N^{-\frac{1}{2}}(t)\epsilon(t)\|_2.
    \end{split}
    \label{DThetal2}
\end{equation}
From the boundedness of $P(t)$ and \eqref{DThetal2}, it follows that $\theta(t+1)-\theta(t)\in L^2$

This completes the lemma's proof.
\end{proof}

\medskip
 Property (iii) in Lemma \ref{Lemma: MIMOStabIND} implies:
\begin{equation}
\lim_{t\rightarrow\infty}\left(\kappa I + {Z}^T(t)P(t-1){Z}(t)\right)^{-\frac{1}{2}}\epsilon(t) = 0    
\end{equation}
\begin{equation}
    \lim_{t\rightarrow\infty} \left ( {\theta}(t+1)- {\theta}(t) \right ) = 0
\end{equation} in this discrete-time case.

\medskip
\textbf{Adaptive control system properties}. We can then establish the following desired adaptive control system properties:
 
\begin{thm}
\label{Theorem:TrackingPerformance}
    The adaptive controller \eqref{uID_new} with adaptive law \eqref{theta(t+1)_resID_Diag} ensures the
    boundedness of all the closed-loop system signals and the achievement of the tracking objective $\lim_{t\rightarrow\infty}({x}(t)-x_m(t)) = 0$. 
\end{thm}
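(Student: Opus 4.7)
My plan is to reduce the theorem to the properties already established in Lemma~\ref{Lemma: MIMOStabIND} by exploiting the certainty-equivalence structure of the indirect design, and then to close a standard discrete-time normalizing-signal feedback argument. The first step is to substitute the control law \eqref{uID_new} into the estimator \eqref{estimatorequatin}: the factor $\Theta_2(t)\Theta_2^{-1}(t)$ collapses and the $\Theta_1^T(t)x(t)$ terms cancel, leaving the linear system $\hat{x}(t+1)=A_m\hat{x}(t)+B_m r(t)$. Since $A_m$ is Schur by Assumption (A1) and $r(t)$ is bounded by (A2), $\hat{x}(t)$ is bounded and $\hat{x}(t)-x_m(t)$ decays exponentially. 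Because $x(t)-x_m(t)=-e_x(t)+(\hat{x}(t)-x_m(t))$, the whole theorem reduces to showing that every closed-loop signal is bounded and that $e_x(t)\to 0$.

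The boundedness step is the main obstacle. The projection on $\Theta_2(t)$ ensures $\Theta_2^{-1}(t)$ is bounded and Lemma~\ref{Lemma: MIMOStabIND}(ii) gives $\Theta_1(t)$ bounded, so the control law yields $\|u(t)\|\le c_0+c_1\|x(t)\|$ and hence $\|\omega_i(t)\|\le c_0'+c_1'\|x(t)\|$. I would then introduce the discrete-time majorant
\begin{equation*}
m_s(t) \;=\; 1 + \sum_{\tau=0}^{t}\delta^{\,t-\tau}\|\omega(\tau)\|^2, \qquad \delta\in\bigl(\rho(A_m)^2,\,1\bigr),
\end{equation*}
and observe that, since $\zeta_{ij}$ and $\xi_{ij}$ are outputs of the stable filters $w_{ij}(z)$ driven by $\omega$, one has $\|Z(t)\|^2+\sum_{i,j}\xi_{ij}^2(t)\le c\,m_s(t)$, so $\bar N(t)\le c'(1+m_s(t))$. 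The error equation $e_x(t+1)=A_m e_x(t)+B_m\tilde\Theta^T(t)\omega(t)$ together with the swap identity \eqref{XiID_MIMO_Diag} and a discrete-time Schwarz inequality then produces a bound of the form $\|x(t)\|^2\le C_0+C_1\sum_{\tau=0}^{t}\delta^{\,t-\tau}\,\epsilon^T(\tau)\bar N^{-1}(\tau)\epsilon(\tau)\,m_s(\tau)$. Since Lemma~\ref{Lemma: MIMOStabIND}(iii) yields $\bar N^{-1/2}(t)\epsilon(t)\in L^2$, the usual discrete-time contradiction argument (assume $m_s(t)\to\infty$ and derive $\|x(t)\|^2=o(m_s(t))$) forces $m_s$ bounded, whence $\omega$, $x$, $u$, and $e_x$ are all bounded.

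Once boundedness is in hand, $N(t)$ and $\bar N(t)$ are bounded, so the $L^2$ property $\bar N^{-1/2}(t)\epsilon(t)\in L^2$ upgrades to $\epsilon(t)\in L^2$, and in particular $\epsilon(t)\to 0$. Applying the discrete-time swapping lemma to \eqref{XiID_MIMO_Diag}, the fact $\theta(t+1)-\theta(t)\in L^2$ from Lemma~\ref{Lemma: MIMOStabIND}(iii), combined with $\omega$ bounded and $w_{ij}(z)$ stable, drives $\xi_{ij}(t)\to 0$ for each $i,j$. Hence $e_{xi}(t)=\epsilon_i(t)-\sum_{j=1}^{m}\xi_{ij}(t)\to 0$, and combined with the exponential convergence of $\hat{x}(t)-x_m(t)$ from the first step, this gives $\lim_{t\to\infty}(x(t)-x_m(t))=0$, completing the proof.
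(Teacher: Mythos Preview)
Your plan is correct and shares the same high-level architecture as the paper's proof: exploit certainty equivalence so that $\hat x$ obeys the reference dynamics, reduce tracking to showing $e_x\to 0$, establish closed-loop boundedness as the central step, and then upgrade the $L^2$ properties of Lemma~\ref{Lemma: MIMOStabIND} to pointwise convergence via $e_{xi}=\epsilon_i-\sum_j\xi_{ij}$.

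Where you diverge is in the machinery for the boundedness step. The paper uses an \emph{operator/small-gain} framework in the $l^1$ norm: it introduces nonnegative stable strictly proper dominating operators $T_{\zeta_{ij}}(z)$, $T_{\xi_j c}(z)$, $T_{\xi_j b}(z)$, $T_0(z)$, derives an inequality of the shape
\[
T_0(z)[\|x\|](t)\;\le\;\|N^{-1/2}\epsilon\|\,T_0(z)[\|x\|](t)+\sum_{j} T_{\xi_j c}(z)\bigl[\|\Delta_{\theta_j}\|\,T_0(z)[\|x\|]\bigr](t)+c_0,
\]
and closes the loop with a discrete-time small-gain argument, the $L^2$ property of $\|N^{-1/2}\epsilon\|+\sum_j\|\Delta_{\theta_j}\|$ supplying the small gain. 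Your route instead builds a quadratic normalizing majorant $m_s(t)$ and runs a Goodwin--Sin contradiction argument. Both are standard; the operator approach makes the feedback structure explicit and appeals directly to a small-gain theorem, while your approach is more self-contained and avoids setting up the operator calculus. One omission worth flagging: your displayed bound $\|x(t)\|^2\le C_0+C_1\sum_\tau\delta^{t-\tau}\epsilon^T\bar N^{-1}\epsilon\,m_s$ records only the $\epsilon$ contribution, but the swap identity for $\xi_{ij}$ also injects a term of the form $\sum_\tau\delta^{t-\tau}\|\Delta\theta(\tau)\|^2\,m_s(\tau)$ (exactly parallel to the $\|\Delta_{\theta_j}\|$ term in the paper's operator inequality). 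Since $\|\Delta\theta\|^2\in L^1$ by Lemma~\ref{Lemma: MIMOStabIND}(iii), the contradiction argument is unaffected, but this term should appear when the proof is written out in full.
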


With the control law \eqref{uID_new}, we have $\hat{x}(t) = x_m(t)$ so that $x(t)-x_m(t) = x(t)-\hat{x}(t)$. Then, the operator concept based proof in \cite{T_DTASTC} is also applicable to the above Theorem \ref{Theorem:TrackingPerformance} of this paper with the properties listed in Lemma \ref{Lemma: MIMOStabIND}. With the operator operating related definitions and propositions, the existence of a stable and strictly proper operator $T_0(z)$, such that $T_0(z)\left[\|x\|\right]$ is bounded, is proved. Then, the boundedness of all the closed-loop system signals is proved. Lastly, the tracking performance, $\lim_{t\rightarrow\infty}e_x(t)=0$, is deduced from estimation error \eqref{EstErrorID_MIMO}, the $L^2$ properties of $(\kappa I + {Z}^T(t)P(t-1){Z}(t))^{-\frac{1}{2}}\epsilon(t) $, $\xi_{ij}(t)$, and ${\theta}(t+1)-\theta(t)$, and the system signal boundedness mentioned before.    

Different from the proof of Theorem 3.1 in \cite{T_DTASTC}, the proof of the indirect adaptive control design for MIMO systems using the least-squares algorithm needs to consider the different control input $u(t)$ in \eqref{uID_new} and the different definitions of the estimation error $\epsilon(t)$ while proving the existence of the operator $T_0(z)$. The detailed proof is included in Appendix. 

\medskip
\textbf{Parameter projection}. To ensure the estimated $\Theta^{-1}_2(t)$ always exists to construct the control law \eqref{uID_new}, the parameter projection schemes in \cite{T_DTASTC} and \cite{t03} can be applied in the adaptive law \eqref{theta(t+1)_resID_Diag} to project the estimates $\theta_{2j}(t),\,j=1,\ldots,m$, away from 0. 

\setcounter{equation}{0}
\section{Robot System Modelling and Collision Avoidance}
\label{sec: RMandCA}
This section presents modeling details of the multi-robot system as well as the collision avoidance mechanism. In Section \ref{sec:RM}, a discrete-time dynamic model of the multiple mobile robot system, that is suitable to implement the adaptive control algorithm introduced in Section \ref{Sec:LSAL} for the tracking control purpose, is demonstrated. In Section \ref{sec:CA}, the collision avoidance mechanism during the tracking control process and its influence on both the nominal control design and the adaptive control design are discussed.

\subsection{Robot System Model and State Tracking Control Problem}
\label{sec:RM}
This subsection offers an introduction to the robot system model and explains the meaning of the state tracking control problem.

\bigskip
\textbf{Robot system model}.
The 3-mobile-robot system in \cite{zt23May} is revised and used for validating the effectiveness of our proposed adaptive state tracking control scheme, where the simplified discrete-time robot model with sample time $\Delta t $ interval is described as 
\bea
v_{i}(t+1) \ts = \ts v_{i}(t) + a_{i}(t)\, \Delta t \nn\\
r_{i}(t+1) \ts = \ts r_{i}(t) + v_{i}(t)\, \Delta t + 0.5\,
a_{i}(t)\,(\Delta t)^2,
\label{3r_robots_indiv_accer}
\eea
with $r_i(k)=\left ( 
\begin{matrix}
x_i(t)\\
y_i(t)
\end{matrix}
\right)$, 
$ v_i(t)=\left(
\begin{matrix}
v_i^x(t)\\
v_i^y(t)
\end{matrix}\right)$, and $a_i(t)=\left(
\begin{matrix}
a_i^x(t)\\
a_i^y(t)
\end{matrix}\right),i = 1,2,3$ denoting the position, velocity and acceleration of Robot$i$, respectively. To include system uncertainties like unknown robot mass and friction coefficient, based on Newton's law, we reform the robot model \eqref{3r_robots_indiv_accer} as
\bea
v_{i}(t+1) \ts = \ts v_{i}(t) + \frac{1}{m}\,\left(u_{i}(t) - b v_{i}(t)\right)\, \Delta t  \nn\\
r_{i}(t+1) \ts = \ts r_{i}(t) + v_{i}(t)\, \Delta t + \frac{1}{2m}\,
\left(u_{i}(t) - b v_{i}(t)\right)\,(\Delta t)^2,
\label{3r_robots_indiv}
\eea
where the robot mass, friction coefficient, and control input (traction force)
are denoted by $m$, $b$ and $u_i(t)=\left(
\begin{matrix}
u_i^x(t)\\
u_i^y(t)
\end{matrix}\right),i = 1,2,3$, respectively.

With the total state vector of Robot$i$, 
\beq
\begin{split}
    X_i(k) = & \;[x_{i}(t), y_{i}(t),v_{i}^x(t),
  v_{i}^y(t)]^T \in \mathbb{R}^{4},
\end{split}
\label{3r_z1}
\eeq
we can express (\ref{3r_robots_indiv}) as
\beq
X_i(t+1) = A X_i(t) + B U_i(t),
\label{3r_z1_update}
\eeq
where the control vector of Robot$i$ $U_i(t)$ is
\beq
U_i(t) = [u_{i}^x(t), u_{i}^y(t)]^T \in \mathbb{R}^{2},
\eeq
and $A \in \mathbb{R}^{4 \times 4}$ and $B \in \mathbb{R}^{4 \times 2}$ are some
matrices as the controlled system matrices which can be directly
obtained from the system models (\ref{3r_robots_indiv}):
\beq
A = \left [
\begin{matrix}
I_{2\times2}& \, \left(1- \frac{0.5b(\Delta t)^2}{m}\right)I_{2\times 2}\\
0_{2\times2}& \left(1-\frac{b\Delta t}{m}\right)I_{2\times 2}\\
\end{matrix}
\right],B = \left [
\begin{matrix}
\frac{0.5(\Delta t)^2}{m}\,I_{2\times 2}\\
\frac{\Delta t}{m}\,I_{2\times 2}
\end{matrix}
\right] ,
\label{AB_value}
\eeq
where $\Delta t $ is the sampling period. 

\bigskip
\textbf{State tracking control problem}. The control objective is to design a state feedback control signal $U_i(t)$ to ensure that all closed-loop system signals are bounded and the system state vector $X_i(t)$ of robot model \eqref{3r_robots_indiv} asymptotically tracks a reference state vector $X_{mi}(t)$ generated from a chosen reference model system
\begin{equation}
\begin{split}
    X_{mi}(t+1) = A_{mi}X_{mi}(t)+B_{mi} R_i(t),\, X_{mi}(t) \in \mathbb{R}^{4},\, R_i(t)\in \mathbb{R}^2,
\end{split}\label{RefModel_simu}
\end{equation}
where $A_{mi}\in \mathbb{R}^{4\times 4 }$ and $B_{mi} \in \mathbb{R}^{4\times  2}$ are some constant matrices, and $R_i(t)$ is chosen reference input signal for desired system response. The adaptive state tracking control aims to ensure robots track the trajectories designed by a selected reference model with different reference inputs. 

\begin{rem}
\label{Rem:referncemodelandtraj}
Reference models \eqref{RefModel_simu} with reference inputs $R_i(t)$ are selected based on desired trajectories. The desired trajectories should make distances between robots larger than 0. \hfill$\square$
\end{rem}

\subsection{Modified Control Input for Collision Avoidance}
\label{sec:CA}
Because the controlled plants usually need a period to finish the tracking objective, robots may collide with others in this period as the tracking errors have not converged to zero yet, though the reference models can ensure the desired robot positions are far enough from others. 
To avoid possible collisions while the robot model trying to track the reference model, we add an additional repulsive force $F_{ri}(t)$, generated by the artificial repulsive potential fields around the robots based on \cite{bc14}, \cite{kb91}, \cite{repulsivebook}, to the control input $U_i(t)$.

\bigskip
\textbf{Repulsive field}. For every robot, referred as Robot $i$, surrounding this robot, the numerical values of this field increase along the direction towards the robot until reaching a significantly large value at the edge of the robot. They decrease along the direction moving away from the robot until reaching zero. 
Mathematically, the repulsive field of Robot $i$ is expressed as 
\begin{equation}
    W_{i} = \left\{
    \begin{array}{ll}
        \frac{1}{2}\eta_i\left(\frac{1}{\gamma}-\frac{1}{\rho_0}\right)^2& \rho(r_i)\leq \gamma \\
         \frac{1}{2}\eta_i\left(\frac{1}{\rho(r_{i})}-\frac{1}{\rho_0}\right)^2& \gamma < \rho(r_{i})\leq\rho_0 \\
         0& \text{else}, 
    \end{array}\right.
    \label{RepulsiveField}
\end{equation}
$i =1,2,3$, where $\eta_i$ is a positive design constant, $\gamma$ is the radius of the robot, $\rho(r_i)$ is the distance from Robot $i$, and $\rho_0$ is a so-called safe distance between robots.

\bigskip 
\textbf{Repulsive force}. Since this is a 3-robot system, Robot $i,$ $i = 1,2,3$, experiences repulsive forces generated by the fields of Robot $j$ which are represented as $f_{wi\leftarrow j} = \left(
\begin{matrix}
f_{wi\leftarrow j}^x\\
f_{wi\leftarrow j}^y
\end{matrix}\right)$, $j = 1,2,3$, $j\neq i$. 
{The notation $f_{wi\leftarrow j}$ denotes the force exerted on Robot $i$ by the artificial field generated around Robot $j$ which is computed as the negative gradient of the field, $\nabla{W_j}$}:
\begin{equation}
    f_{wi\leftarrow j}= \left\{
    \begin{array}{ll}
   \eta_j\left(\frac{1}{\gamma}-\frac{1}{\rho_0}\right) \frac{\nabla_{r_{ij}}\rho(r_{ij})}{\gamma^2} & \rho(r_{ij})\leq \gamma \\
         \eta_j\left(\frac{1}{\rho(r_{ij})}-\frac{1}{\rho_0}\right) \frac{\nabla_{r_{ij}}\rho(r_{ij})}{\rho^2(r_{ij})} & \gamma<\rho(r_{ij})\leq\rho_0 \\
         0& \text{else}, 
    \end{array}\right.
    \label{RepulsiveForce}
\end{equation}
where $\nabla_{r_{ij}}\rho(r_{ij}) = \left[\frac{\partial \rho}{\partial x},\frac{\partial \rho}{\partial y}\right]^T$ denotes the gradient of the distance from Robot $i$ to Robot $j$ (the components in the x-direction and y-direction of the unit vector directed to Robot $i$). Unlike \eqref{RepulsiveField}, in \eqref{RepulsiveForce}, we use $\rho(r_{ij})$ to replace 
$\rho(r_{i})$ from \eqref{RepulsiveField}, to emphasize that $f_{wi\leftarrow j}$ is directly related to the distance between Robot $i$ and Robot $j$.

\bigskip
{\textbf{Collision avoidance property of repulsive force}} The collision avoidance performance is guaranteed by the conservation of energy. We summarize the performance of the repulsive force $f_{wi}$ in \eqref{RepulsiveForce} in the followed proposition. 
\begin{prop}
    When robots are only controlled by the repulsive fore $f_{wi\leftarrow j}$ ($\alpha_i(t) = 0$, $\forall t \in \mathbb{N}$), no collisions between robots will happen. 
\end{prop}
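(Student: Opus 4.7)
The plan is to prove collision avoidance by a total-energy (Lyapunov) argument based on the fact that the repulsive force $f_{wi\leftarrow j}$ in \eqref{RepulsiveForce} is, by construction, the spatial gradient of the scalar potential $W_j$ in \eqref{RepulsiveField}. First I would assemble, for each ordered pair $(i,j)$, the pairwise potential $W_{ij}(t) := W_j(r_i(t))$ and define the total mechanical energy of the 3-robot system as
\[
E(t) \;=\; \sum_{i=1}^{3} \tfrac{1}{2} m \|v_i(t)\|^2 \;+\; \sum_{1\leq i<j\leq 3} W_{ij}(t).
\]
Under the hypothesis of the proposition the control input reduces to $U_i(t)=\sum_{j\neq i} f_{wi\leftarrow j}(t) = -\nabla_{r_i}\sum_{j\neq i} W_{ij}(t)$, so only the conservative repulsive forces and the dissipative friction term $-b\,v_i(t)$ act on each robot.

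Second, I would derive a discrete-time dissipation inequality for $E$. Taking the inner product of the velocity update in \eqref{3r_robots_indiv} with $v_i(t)$, summing over $i$, and using the work--potential relation $f_{wi\leftarrow j}(t)\cdot(r_i(t+1)-r_i(t)) = -(W_{ij}(t+1)-W_{ij}(t)) + O((\Delta t)^2)$, one obtains
\[
E(t+1) - E(t) \;=\; -\,b\sum_{i=1}^{3}\|v_i(t)\|^2\,\Delta t \;+\; O\bigl((\Delta t)^2\bigr).
\]
For the sampling period $\Delta t$ small enough, the dissipation dominates the discretisation remainder, so $E(t+1)\leq E(t)$, and hence $E(t)\leq E(0)$ for all $t\in\mathbb{N}$.

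Third, I would convert this energy bound into a distance bound. From \eqref{RepulsiveField}, $W_{ij}$ attains its maximum value $M:=\tfrac{1}{2}\eta_j(\tfrac{1}{\gamma}-\tfrac{1}{\rho_0})^2$ precisely on the collision set $\{\rho(r_{ij})\leq\gamma\}$ and is strictly smaller outside. Interpreting ``collision'' as $\rho(r_{ij})\leq\gamma$, I would assume that the initial configuration is collision-free (as guaranteed by Remark \ref{Rem:referncemodelandtraj}) and that the design gain $\eta_j$ is chosen so that $E(0)<M$. Since the kinetic energy is non-negative, $W_{ij}(t)\leq E(t)\leq E(0)<M$ for every $t$ and every pair, which forces $\rho(r_{ij}(t))>\gamma$; that is, no collision ever occurs.

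The main obstacle is passing from the continuous-time energy identity to the discrete-time update in \eqref{3r_robots_indiv}: the repulsive force is evaluated at $r_i(t)$ and held constant over the sampling interval, while the true potential change $W_{ij}(t+1)-W_{ij}(t)$ depends on the whole displacement, so the work--potential cancellation is only exact up to an $O((\Delta t)^2)$ term. Controlling this remainder, either via a smallness condition on $\Delta t$ or by replacing $f_{wi\leftarrow j}(t)$ by a mean-value discrete gradient across the step, is the technical crux. A secondary subtlety is that the potential in \eqref{RepulsiveField} is only bounded (rather than unbounded as $\rho\to 0$), so the claim is genuinely conditional on $E(0)<M$; this should be stated as an explicit hypothesis on the design gain $\eta_j$ and on the set of admissible initial velocities.
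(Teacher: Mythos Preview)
Your proposal is correct and follows essentially the same conservation-of-energy argument as the paper: both establish that the repulsive potential forms a barrier that the available kinetic energy cannot surmount, provided the design gain $\eta_j$ is chosen appropriately. The paper's version is terser---it invokes energy conservation directly and states the barrier condition pairwise as \eqref{W_CE} in terms of a maximum speed $v_{\max}$ rather than your global bound $E(0)<M$, and it does not address the discrete-time discretization remainder that you (rightly) identify as the technical crux.
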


\begin{proof}
The energy applied to Robot $i$ by $U_i(t)$ is distributed into three components: kinetic energy of Robot $i$, potential energies $W_j(\rho(r_{ij}))$ stored as Robot $i$ approaches Robot $j$, $ j \neq i$, and the energy used to overcome friction.

With $v_{max}$ denoting constant as the maximum value of the robots' speed and combining with the definition of the field \eqref{RepulsiveField}, one can find constants $\rho_0$ and $\eta_j$ in \eqref{RepulsiveForce} such that 
\begin{align}
    W_{j}(\rho_{min}) \geq W_{j}&(\rho_{0}) + \frac{1}{2} m \|v_{max}\|_2^2,
    \label{W_CE}
\end{align}
where $\rho_{min}>0$ denotes a pre-defined minimal distance between robots, $m$ denotes the mass of the robots and $\Delta E > 0$ denotes the tolerance energy for tracking while the robot avoids collisions. 

When robots are only controlled by the repulsive fore $f_{wi\leftarrow j}$, \eqref{3r_robots_indiv} and \eqref{RepulsiveField}-\eqref{RepulsiveForce} imply
\begin{equation}
   \Delta E_{i\leftarrow j}(t) =  W_{j}(\rho_{min}) - W_{j}(\rho_{ij}(t))\geq 0, \forall t \in \mathbb{N},
    \label{WRhoijGeqRhomin}
\end{equation}
i.e., 
\begin{equation}
    \rho_{ij}(t) \geq \rho_{min},\,\forall t \in \mathbb{N}.
    \label{RhoijGeqRhomin}
\end{equation}

The collision avoidance property of the repulsive force is proved by \eqref{RhoijGeqRhomin}. 
\end{proof}

\bigskip
\textbf{Repulsive force applied on robot model}. To achieve our target, collision avoidance, we modify the control input $U_i(t)$ in \eqref{3r_z1_update} as 
\begin{equation}
   U_i(t) = F_{ri}(t) + \alpha_i(t) U_{oi}(t),
    \label{u_mix}
\end{equation}
where \begin{equation}
    \begin{split}
    F_{ri}(t) = \sum_{j\neq i} f_{wi\leftarrow j}(t) \in \mathbb{R}^{2}
\end{split}
\label{FrDef}
\end{equation}
denotes the repulsive force based on the artificial potential repulsive field $W_j, \, j \neq i$, $\alpha_i(t)$ is a coefficient designed to maintain tracking to some extent while avoiding collisions
and \begin{equation}
    U_{oi}(t) = {\Theta_{2i}^{-1}(t)} \left(\Theta_{1i}^T(t)X_i(t) + R_i(t)\right)
    \label{TCS}
\end{equation}
denotes the tracking control signal.

\begin{figure}[t]
    \centering
    \includegraphics[width=0.7\textwidth]{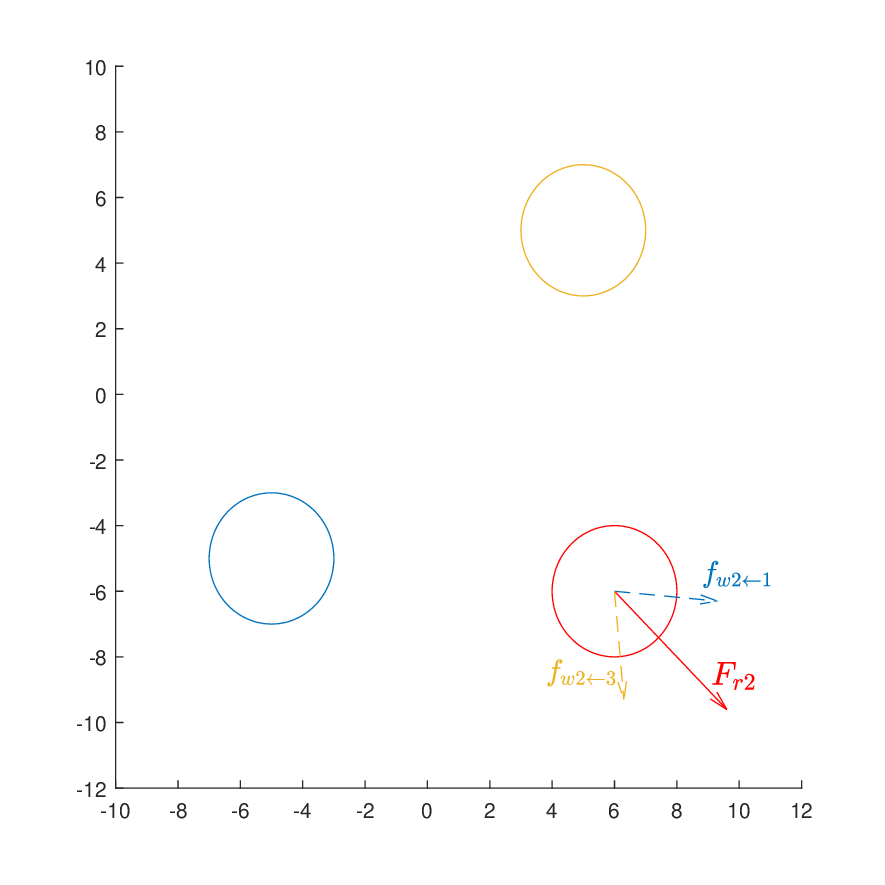}
    \caption{Schematic diagram of robot positions. }
    \label{Fig: schematic diagram of robot positions}
\end{figure}

As one robot may influenced by more than one robot, the repulsive force acts on every robot is the resultant force of all the field forces $f_{wi\leftarrow j}$ from the artificial field of other robots. We explain that with an example. The schematic diagram of robot positions is shown in Figure \ref{Fig: schematic diagram of robot positions}, Robot 1, Robot 2, and Robot 3 are depicted as three circles in blue, red, and yellow, respectively. For Robot 2 (the red circle), it not only undergoes the field force $f_{w2\leftarrow1}$ from Robot 1 (the blue circle), but also $f_{w2\leftarrow3}$ from Robot 3 (the yellow circle). Thus, the repulsive force of Robot2, $F_{r2}  $, is the vector sum of $f_{w2\leftarrow1}$ and $f_{w2\leftarrow3}$: 
\begin{equation}
    F_{r2}   
= \left(
\begin{matrix}
f_{w2\leftarrow1}^x + f_{w2\leftarrow3}^x \\
f_{w2\leftarrow1}^y + f_{w2\leftarrow3}^y
\end{matrix}\right).
\end{equation}

\smallskip
\begin{rem}
\label{Rem:SuspensionAdaptation}
    To avoid the influence of the estimation error and $Z(t) $ computation led by $F_{ri}(t)$, we will suspend the parameter adaptation once $F_{ri}(t)\neq0$. \hfill$\square$ 
\end{rem}

\bigskip
\textbf{Collision avoidance property of the modified control input}. 
Coefficient $\alpha_i(t)$ in \eqref{u_mix} must ensure that the additional energy contributed by the component of $U_{oi}(t)$ in the direction toward Robot $j$ does not violate the collision-free criteria \eqref{WRhoijGeqRhomin} at every sampling period. Thus, the maximum extra energy introduced by $U_{oi}(t)$ during each sampling period is
\begin{equation}
  E_{oi\rightarrow j}(t) =   -\frac{U_{oi}(t)\cdot f_{wi\leftarrow j}(t) }{\|f_{wi\leftarrow j}(t) \|_2}v_{max}\Delta t
    \label{ExtraEnergyMax}
\end{equation}
when $\|f_{wi\leftarrow j}(t) \|_2 \neq 0$, i.e., $\rho_{ij} (t)> \rho_0$. 

To ensure $U_{oi}(t)$ does not violate the collision-free criteria between Robot $i$ and all other robots, $\alpha_i(t)$ is designed as 
\begin{equation}
    \alpha_i(t) = \min\{\alpha_{i\leftarrow j }(t) \mid j \in \{1, 2, 3\}, j\neq i\},
    \label{alphaDesign}
\end{equation}
where 
\begin{align}
    \alpha_{i\leftarrow j }(t)  = 
         \left\{\begin{array}{ll}
         \min\left(\frac{\beta\Delta E_{i\leftarrow j}(t)}{E_{oi\rightarrow j}(t)}, 1\right) &  E_{oi\rightarrow j}(t) > 0 \\ 
        1 & \text{else},
    \end{array}\right.
    \label{alphainDifferentDirection}
\end{align}
with $\beta\in [0,1)$ denoting a design parameter.

\smallskip
We present the collision avoidance property of \eqref{u_mix} with coefficient $\alpha_i(t)$ designed in \eqref{alphaDesign} in the following proposition. Since the design principle of coefficient $\alpha_i(t)$ in \eqref{u_mix} is to ensure inequality \eqref{WRhoijGeqRhomin} holds, we can summarize the collision avoidance property in the following proposition.

\begin{prop}
\label{prop:repulsiveCA}
    \textit{The modified control input \eqref{u_mix} with the coefficient $\alpha_i(t)$ designed as \eqref{alphaDesign} guarantees that the distances between robots are larger than zero.}
\end{prop}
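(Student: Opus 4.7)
The plan is to extend the energy-conservation argument of the previous proposition (pure repulsive case) to the modified input $U_i(t) = F_{ri}(t) + \alpha_i(t)U_{oi}(t)$. Since the repulsive component $F_{ri}(t)$ alone already enforces \eqref{WRhoijGeqRhomin}, the task reduces to showing that the additional mechanical work done by $\alpha_i(t)U_{oi}(t)$ along the Robot-$i$-to-Robot-$j$ direction over any sampling period is strictly smaller than the remaining potential margin $\Delta E_{i\leftarrow j}(t) = W_j(\rho_{min}) - W_j(\rho_{ij}(t))$, so that the total potential is still bounded above by $W_j(\rho_{min})$.

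First, I would fix an arbitrary ordered pair $(i,j)$ with $j \neq i$ and write the per-step energy balance for this pair under the dynamics \eqref{3r_robots_indiv}. Only the component of $U_{oi}(t)$ in the direction opposite to $f_{wi\leftarrow j}(t)$ (i.e.\ toward Robot $j$) can reduce the inter-robot separation, and a conservative upper bound on the energy it can deliver over one sampling period is exactly $E_{oi\rightarrow j}(t)$ defined in \eqref{ExtraEnergyMax}, where $v_{max}\Delta t$ plays the role of a worst-case displacement. Substituting the case analysis \eqref{alphainDifferentDirection} then yields $\alpha_{i\leftarrow j}(t)\,E_{oi\rightarrow j}(t) \leq \beta\,\Delta E_{i\leftarrow j}(t)$ when $E_{oi\rightarrow j}(t) > 0$ and $\alpha_{i\leftarrow j}(t)\,E_{oi\rightarrow j}(t) \leq 0$ otherwise.

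Second, because $\alpha_i(t)$ is the minimum over $j \neq i$ in \eqref{alphaDesign}, the single scalar $\alpha_i(t)$ simultaneously enforces the per-pair bound against every other robot. Combining this with the pure-repulsive energy inequality of the previous proposition, I would obtain $W_j(\rho_{ij}(t+1)) \leq W_j(\rho_{ij}(t)) + \beta\,\Delta E_{i\leftarrow j}(t) \leq (1-\beta) W_j(\rho_{ij}(t)) + \beta W_j(\rho_{min}) < W_j(\rho_{min})$, and by strict monotonicity of $W_j$ on $(0,\rho_0]$ this gives $\rho_{ij}(t+1) > \rho_{min} > 0$. An induction on $t$, with base case provided by Remark \ref{Rem:referncemodelandtraj} (initial distances are positive and the reference trajectory is collision-free), completes the argument for all $t \in \mathbb{N}$.

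The main obstacle I expect is making the per-step worst-case energy bound fully rigorous under the second-order discrete update \eqref{3r_robots_indiv}: the continuous-time energy identity used implicitly in \eqref{ExtraEnergyMax} relies on the displacement being approximately $v\,\Delta t$, whereas the true discrete displacement contains an additional $\tfrac{1}{2m}(U_i(t)-bv_i(t))(\Delta t)^2$ term. The role of the slack $\beta < 1$ in \eqref{alphainDifferentDirection} is precisely to absorb such higher-order contributions; I would need to verify that $\beta$ can indeed be chosen small enough (given the bounds on $v_{max}$, $m$, $b$, and $\Delta t$) for this absorption to be uniform in $t$. Remark \ref{Rem:SuspensionAdaptation} (suspension of adaptation when $F_{ri}(t)\neq 0$) does not enter this collision-avoidance argument directly, since the argument depends only on the kinematics \eqref{3r_robots_indiv} and the energy structure of \eqref{RepulsiveField}--\eqref{RepulsiveForce}.
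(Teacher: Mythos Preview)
Your approach is essentially the same as the paper's: both argue that the design of $\alpha_i(t)$ in \eqref{ExtraEnergyMax}--\eqref{alphainDifferentDirection} preserves the energy inequality \eqref{WRhoijGeqRhomin}, from which $\rho_{ij}(t)\geq\rho_{min}>0$ follows by monotonicity of $W_j$. The paper's proof, however, is a single sentence that simply asserts this preservation ``by the design principle of $\alpha_i(t)$'' without carrying out the per-step energy accounting, the case split on $E_{oi\rightarrow j}(t)$, the convex-combination bound $W_j(\rho_{ij}(t+1))\leq(1-\beta)W_j(\rho_{ij}(t))+\beta W_j(\rho_{min})$, or the induction you outline; your proposal is a considerably more detailed fleshing-out of that same idea, and the discretization obstacle you flag (the $\tfrac{1}{2m}(\Delta t)^2$ term not covered by the $v_{max}\Delta t$ worst case) is a genuine technical point that the paper's proof does not address.
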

\begin{proof}
Based on the design principle of coefficient $\alpha_i(t)$ in \eqref{ExtraEnergyMax}-\eqref{alphainDifferentDirection}, 
the modified control design \( U_i(t), \) $i=1,2,3,$ in \eqref{u_mix} ensures that inequality introduced in \eqref{WRhoijGeqRhomin},
$W_{j}(\rho_{min}) - W_{j}(\rho_{ij}(t))\geq 0, \forall t \in \mathbb{N},$ where $ i,j \in \{1, 2, 3\}, j\neq i$,
always holds, which implies 
the guarantee of the collision avoidance property \(\rho_{ij}(t) \geq \rho_{min} > 0,\,\forall t \in \mathbb{N} \). 
\end{proof}

\bigskip
\textbf{Tracking property of the modified control input}. The tracking property is summarized in the following proposition. 

\begin{prop}
     \textit{The modified control input \eqref{u_mix} with the coefficient $\alpha_i(t)$ in \eqref{alphaDesign} guarantees that \begin{itemize}
    \item[(i)] $F_{ri}(t)$ converges to zero, i.e., $\lim_{t\rightarrow \infty} F_{ri}(t) = 0$;
    \item[(ii)] $F_{ri}(t)$, $U_i(t)$ and $X_i(t)$ are bounded; and
    \item [(iii)] the tracking objective $\lim_{t\rightarrow\infty}({X}_i(t)-X_{mi}(t)) = 0$ is achieved.
\end{itemize}}
\end{prop}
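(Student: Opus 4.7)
The plan is to establish the three claims by first proving boundedness of all closed-loop signals (statement (ii)), then using the separation margin of the reference trajectories together with adaptive tracking convergence to obtain (iii), and finally concluding (i) as a consequence. Note that the natural logical order is (ii), then (iii), then (i), since the convergence of $F_{ri}(t)$ to zero is driven by the tracking convergence meeting the reference-trajectory margin.

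For boundedness, I would first observe that $F_{ri}(t)$ is pointwise bounded directly from the piecewise formula \eqref{RepulsiveForce}: its magnitude is capped by the value attained at the inner boundary $\rho(r_{ij})\le\gamma$, a constant multiple of $\eta_j(1/\gamma-1/\rho_0)/\gamma^2$. By construction in \eqref{alphainDifferentDirection}, $\alpha_i(t)\in[0,1]$. For $U_{oi}(t)=\Theta_{2i}^{-1}(t)(\Theta_{1i}^T(t)X_i(t)+R_i(t))$, the boundedness of $\theta(t)$ is preserved across all $t$ because the adaptive law \eqref{theta(t+1)_resID_Diag} is suspended whenever $F_{ri}(t)\neq 0$ by Remark \ref{Rem:SuspensionAdaptation}, so the Lyapunov-like quantity $V(\tilde\theta,t)$ from Lemma \ref{Lemma: MIMOStabIND} does not grow during collision-avoidance intervals; parameter projection keeps $\Theta_{2i}^{-1}(t)$ bounded. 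An inductive argument over the state recursion \eqref{3r_z1_update}, combined with the bounded reference input and the operator-theoretic signal-boundedness argument underlying Theorem \ref{Theorem:TrackingPerformance}, then yields boundedness of $X_i(t)$.

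Claim (iii) is the main technical task, and I would handle it by a switching argument between two regimes: the \emph{tracking regime} where $F_{ri}(t)=0$ and $\alpha_i(t)=1$, so the plant $U_i=U_{oi}$ is exactly the adaptive system of Section \ref{Sec:LSAL} to which Theorem \ref{Theorem:TrackingPerformance} applies; and the \emph{collision regime} where $F_{ri}(t)\neq 0$ and adaptation is frozen, so $V(\tilde\theta,t)$ and all parameter-estimate norms are preserved across the interval. Using Remark \ref{Rem:referncemodelandtraj}, fix a uniform margin $\delta>0$ with $\|r_{mi}(t)-r_{mj}(t)\|\ge \rho_0+\delta$ for all $t$ and all $i\neq j$. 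The aim is to show the total duration of collision-regime intervals is finite, or at least that the system eventually enters and remains in the tracking regime.

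The hard part will be certifying this regime-switching behavior. My plan is to combine three ingredients: Proposition \ref{prop:repulsiveCA} guarantees the hard lower bound $\rho_{ij}(t)\ge \rho_{\min}>0$ so no collision ever occurs; each tracking-regime interval strictly decreases the Lyapunov-like $V(\tilde\theta,t)$ through the summability of $N^{-1/2}(t)\epsilon(t)$ in Lemma \ref{Lemma: MIMOStabIND} (iii), while collision-regime intervals leave $V(\tilde\theta,t)$ unchanged; and the geometric bound on $F_{ri}$ together with $\alpha_i\in[0,1]$ limits how much the plant state can drift during a collision episode. Once the tracking error is driven strictly inside the margin $\delta/2$ during a sufficiently long tracking regime, the robots' actual separations exceed $\rho_0$, the collision regime is never re-entered, and Theorem \ref{Theorem:TrackingPerformance} delivers $\lim_{t\to\infty}(X_i(t)-X_{mi}(t))=0$. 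Claim (i) then follows immediately from \eqref{RepulsiveForce}, since the separations strictly exceed $\rho_0$ in the limit, completing the proof.
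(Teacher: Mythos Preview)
Your proposal takes a substantially different and more careful route than the paper's own proof. The paper argues in the stated order (i)$\to$(ii)$\to$(iii): for (i) it claims directly from Remark \ref{Rem:referncemodelandtraj} that reference-trajectory separation forces $\lim_{t\to\infty} E_{oi\rightarrow j}(t)\le 0$ and hence $F_{ri}(t)\to 0$, \emph{before} any tracking is established; for (ii) it bounds $F_{ri}$ via the lower bound $\rho_{ij}\ge\rho_{\min}$ from Proposition \ref{prop:repulsiveCA} and then invokes BIBO stability of the open-loop robot model \eqref{3r_z1_update} to pass from bounded $U_i$ to bounded $X_i$; for (iii) it simply notes that once $F_{ri}\to 0$ the system reduces to the pure adaptive setting of Theorem \ref{Theorem:TrackingPerformance}.

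Your reordering (ii)$\to$(iii)$\to$(i), the explicit switching-regime decomposition, and the margin $\delta$ with $\|r_{mi}-r_{mj}\|\ge\rho_0+\delta$ are all additions not present in the paper. What the paper's approach buys is brevity; what yours buys is a more defensible logical skeleton, since the paper's step (i) implicitly presumes that $U_{oi}$ already points in the separating direction (a tracking-like conclusion), and its BIBO claim for \eqref{3r_z1_update} is delicate given that $A$ in \eqref{AB_value} has eigenvalue $1$. The one substantive piece your plan leaves open---and explicitly flags as ``the hard part''---is the same gap the paper glosses over: certifying that the system eventually enters a tracking regime long enough to drive the error below the margin, i.e., ruling out indefinitely recurring collision episodes. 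Neither argument fully closes this; yours at least isolates it.
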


\begin{proof}
    (i) Because the reference models $X_{mi}(t) $ are designed based on desired trajectories making distances between robots are larger than 0 from Remark \ref{Rem:referncemodelandtraj}, we have $\lim_{t\rightarrow\infty}E_{oi\rightarrow j}(t) \leq 0$, which implies $\lim_{t\rightarrow \infty} F_{ri}(t) = 0$. 

\smallskip
(ii) Based on \eqref{WRhoijGeqRhomin}, we have $W_{i\leftarrow j}(\rho_{ij}) < W_{i\leftarrow j}(\rho_{min}),$
which implies 
$$\|f_{wi\leftarrow j}(\rho_{ij}(t)) \|< \|f_{wi\leftarrow j}(\rho_{min})\|.$$
With the definition of the collision avoidance signal $F_{ri}(t)$ applied on Robot $i$ in \eqref{FrDef}, $F_{ri}(t)$ is bounded. 

\smallskip
From Theorem \ref{Theorem:TrackingPerformance} and the suspension of parameter adaptation mentioned in Remark \ref{Rem:SuspensionAdaptation}, $\Theta_{1i}(t)$ and $\Theta_{2i}(t)$ are bounded and the elements of $\Theta_2(t)$ are projected away from 0. 
Because of the existence of $\alpha_i(t)$ for restricting the extra energy of the system and the boundedness of $F_{ri}(t)$, $U_i(t) $ is bounded. 

Since the robot system \eqref{3r_z1_update} is bounded-input bounded-output (BIBO) stable, the boundedness of the robot states $X_{i}(t)$ is guaranteed by the boundedness of $U_i(t)$.

\smallskip
(iii) According to (i): $\lim_{t\rightarrow \infty} F_{ri}(t) = 0$, the control process will become a normal adaptive state feedback tracking control process without the influence of the repulsive force $F_{ri}(t)$ at last, which means that the tracking objective $\lim_{t\rightarrow\infty}\left(X_i(t)-X_{mi}(t)\right) =0$ can always be achieved.
\end{proof}

\bigskip
\begin{rem}
    {At some specific points, the net force from the repulsive fields of different robots on a particular robot may be zero, the collision avoidance mechanism cannot work. However, since this particular robot still receives input for tracking, it will move in a specific direction toward its desired trajectory. As soon as it moves, the conditions causing the issue are broken, so it won't impact system performance.\hfill$\square$}
\end{rem}

\setcounter{equation}{0}
\section{Simulation Study}
\label{sec: Simulation}
This section presents the simulation study to evaluate the performance of the developed adaptive state feedback control scheme. In Section \ref{sec: SS(3r)}, the details of 3-mobile-robot system, which is chosen to be the controlled plant \eqref{Plant}, and a simulation case, that satisfies the constraints on the physical properties of the robot, is presented. Simulation results are demonstrated in Section \ref{sec:SR}.

\subsection{Simulation System}
\label{sec: SS(3r)}
In this subsection, the simulation system constructed by three mobile robots is detailed. 
All the system states and system inputs should respect the physical constraints.

\medskip
\subsubsection{\textbf{A 3-Mobile-Robot System Model and Parameter Values}}
 The 3-mobile-robot system model for the simulation study is introduced as \eqref{3r_z1_update} of Section \ref{sec:RM}: 
 \beq
X_i(t+1) = A X_i(t) + B U_i(t), \;i = 1,2,3,
\label{3r_z1_update_simu}
\eeq
where the control vector of Robot$i$ $U_i(t)=[u_{i}^x(t), u_{i}^y(t)]^T$ 
and $A = \left [
\begin{matrix}
I_{2\times2}& \, \left(1- \frac{0.5b(\Delta t)^2}{m}\right)I_{2\times 2}\\
0_{2\times2}& \left(1-\frac{b\Delta t}{m}\right)I_{2\times 2}\\
\end{matrix}
\right],B = \left [
\begin{matrix}
\frac{0.5(\Delta t)^2}{m}\,I_{2\times 2}\\
\frac{\Delta t}{m}\,I_{2\times 2}
\end{matrix}
\right] ,$
with $\Delta t = 0.05 $ sec being the sampling period. 

\medskip
The robot mass in \eqref{3r_z1_update} is configured based on TurtleBot\footnote{In this paper, we consider a TurtleBot carrying additional loads, bringing its total mass to 18 kg. The feature details of TurtleBot are available in \url{https://emanual.robotis.com/docs/en/platform/turtlebot3/features/}.}, $ m = 18 \,{\rm kg}.$ From \cite{CM09FrictionValue}, the friction coefficient in \eqref{3r_z1_update} is set as $b = 4\,  {\rm N}\cdot{\rm sec/m}.$

\medskip
\subsubsection{\textbf{Reference Model Systems}}
The details of the robot reference models, the corresponding parameter values, and the reference input components are elaborated below.

\medskip
\textbf{Reference model}.
The matrices $A_{mi}$ and $B_{mi}$ of the reference model $X_{mi}(t+1)= A_{mi}X_{mi}(t) + B_{mi} R_i(t)$, are selected as 
\begin{equation}
\begin{split}
     A_{mi} = \begin{bmatrix}
        0.9999 I_{2\times2}  &0.9997I_{2\times2}\\
        -0.0028 I_{2\times2} & 0.775 I_{2\times2}\\
    \end{bmatrix}, 
    B_{mi} = \begin{bmatrix}
        -0.0007 I_{2\times2}\\
        -0.0278 I_{2\times2}\\
    \end{bmatrix},\;
\end{split}
i=1,2,3
    \label{AmBmValue}
\end{equation}
All the eigenvalues of $A_{mi}$ are inside the unit circle of the complex plane: $   0.9868$, $0.7881$. 

Based on \eqref{NewMatCond_MIMO} and \eqref{AmBmValue}, the matching parameters $\Theta_{1i}^*$ and $\Theta_{2i}^*$ are 
\begin{equation}
\begin{split}
    &\Theta_{1i}^{*}  =  \begin{bmatrix}
        0.1I_{2\times2}\;0.1I_{2\times2}
    \end{bmatrix}^T,\\ &\Theta_{2i}^* = -0.01 I_{{2\times 2}},\,i=1,2,3.
\end{split}
    \label{CaseIT}
\end{equation}

\medskip
Based on the above reference model, we consider making Robot 2 and Robot 3 move as two concentric circles with different radii and making Robot 1 stop at the $(0,0)$ point.

Then, reference inputs $R_i(t)$ in \eqref{RefModel_simu} are selected as $R_1(t) = 0.2[-\sin(\frac{\pi t}{2000}),\cos(\frac{\pi t}{2000})]^T$, $R_2(t) =0.375 [\sin(\frac{\pi t}{2000}),-\cos(\frac{\pi t}{2000})]^T$, and $R_3(t) = [0,0]^T$.

\medskip
\textbf{Initial value settings}.
For all the robots, every component of the initial parameter vector ${\theta}_{0}$ in \eqref{theta(t+1)_resID_Diag} is set as $0.625\theta^*$. 

The other simulation parameters are selected as $P_0 = I_{m(n+1)}$, $\kappa =0.00001$ for every individual robot. The initial states are selected as $X_1(0) = [0,    0    , 0,0]^T$, $X_2(0) = [0,    1.52   , 0,0]^T$, and $X_3(0) = [0.5, -1    , 0,0]^T$. 
The initial states of the reference model are set as $X_{mi}(0) = X_i(0), i= 1,2,3$.

\medskip
\textbf{Repulsive parameter settings}. Through our testing, we select $\beta = 0.9$, $\eta_i = 4.5$ and $\rho_0 = 0.36$ to be the parameter values of the repulsive force in \eqref{RepulsiveForce}.

\subsection{Simulation Results}
\label{sec:SR}

Simulation results for the proposed adaptive control algorithm with the repulsive force are displayed in this subsection.

\begin{figure}[H]
\begin{center}
\includegraphics[width=0.78\textwidth]{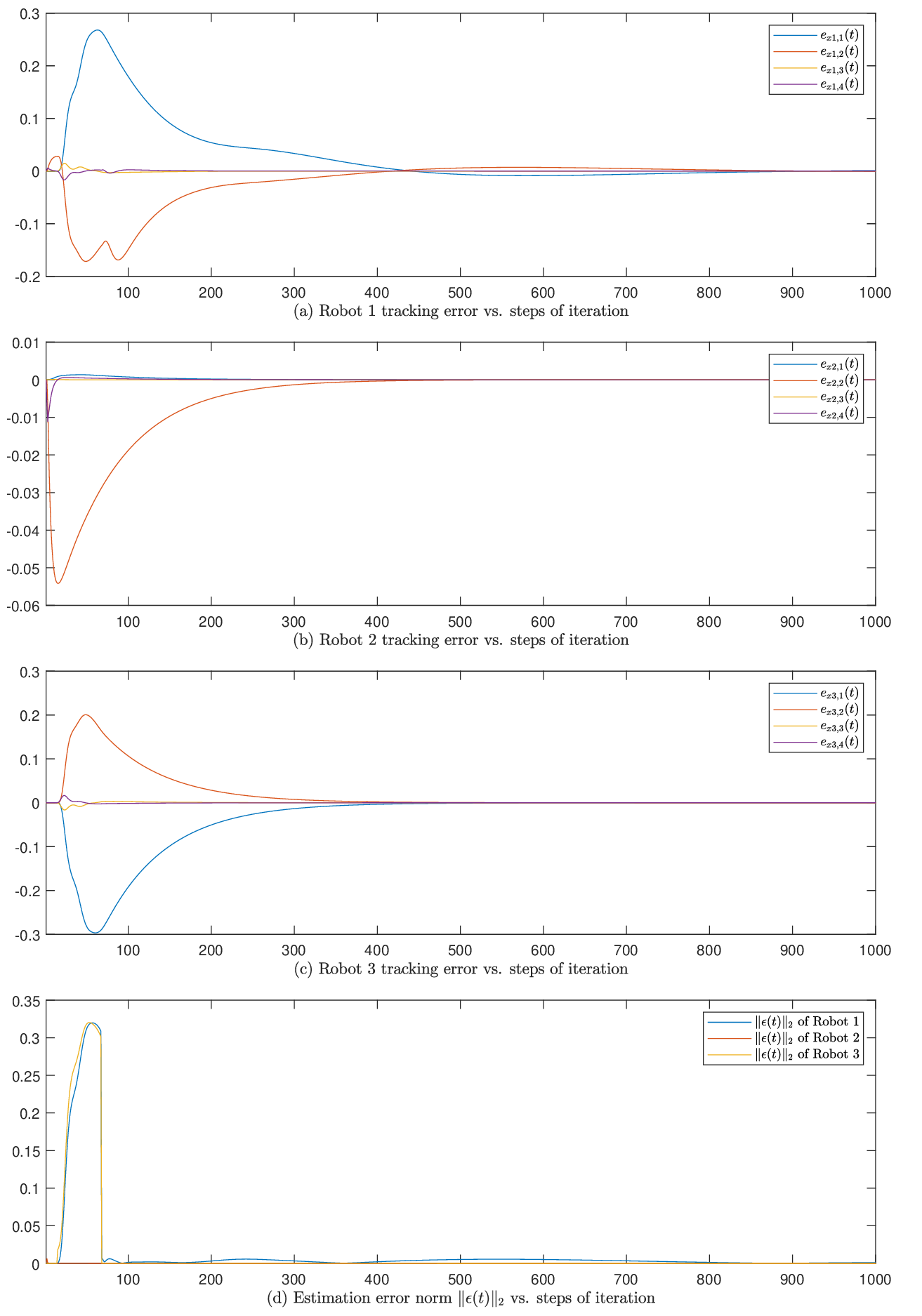}
\vspace{-0.3cm}
\caption{System response for the control input containing $F_{ri}(t)$. (a) Robot 1 tracking error components vs. steps of iteration. (b) Robot 2 tracking error components vs. steps of iteration. (c) Robot 3 tracking error components vs. steps of iteration. (d) Estimation error norm $\|\epsilon(t)\|_2$ vs. steps of iteration}
\label{Res_C1E}
\end{center}
\vspace{-0.5cm}
\end{figure}

According to Fig. \ref{Res_C1E} the closed-loop system is stable and all the components of the tracking error converge to zero asymptotically. Notably, the estimation error norm converging to zero asymptotically verifies the optimality of the adaptive laws, which minimizes the accumulated estimation error. 

From Fig. \ref{Res_C1T}, all the robots can move as the designed trajectories
controlled by either the control input containing $F_{ri}(t)$ or not. All the obvious direction changes of Robot 1 and 3 indicate the robots do make some efforts to avoid collisions\footnote{The animations of the robot motions can be found in \url{https://sites.google.com/virginia.edu/zhao-robotmotion-animation}.}.

In Fig. \ref{Res_C1D}, we compare the shortest distance between robots controlled by the modified control input \eqref{u_mix} and by the tracking control signal \eqref{TCS} only to see the collision avoidance performance. From the comparison chart, we observe that the blue curve always remains above zero, while some values of the red curve fall below zero. This indicates that our modified control design \eqref{u_mix} successfully avoids collisions.

\begin{figure}[t]
\begin{center}
\includegraphics[width=0.45\textwidth]{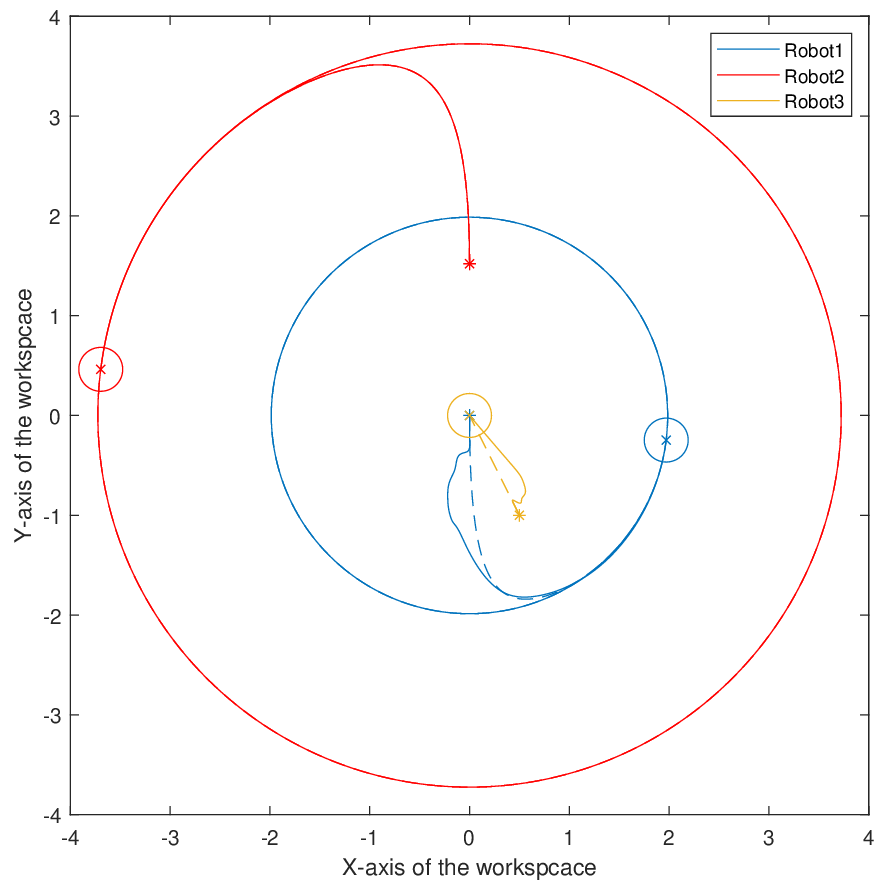}
\vspace{-0.3cm}
\caption{Robot trajectories controlled by modified control signal $U_{i}(t)$ (solid lines) or tracking control signal $U_{oi}(t)$ (dashed lines), with asterisk markers ($*$) indicating the central points of the robots' initial positions and circles enclosing colored product markers ($\times$) representing their final positions.}
\label{Res_C1T}
\end{center}
\vspace{-0.8cm}
\end{figure}

After examining the control inputs in the simulation process, all the control inputs are within $(-3.2,4.4)$, where the unit of input is Newton. This is reasonable because it implies that, based on Newton's laws, the robot's acceleration varies approximately within $(-0.18, 0.24)$, with the unit m/s$^2$.

From Fig. \ref{Res_C2G}, we find that the system applying the least-squares algorithm has a smaller absolute value of tracking error and a faster convergence speed of the tracking error.

With the above simulation study, both the state tracking performance and the collision avoidance performance of the proposed least-squares based adaptive control scheme are verified.

\begin{figure}[t]
\begin{center}
\includegraphics[width=0.78\textwidth]{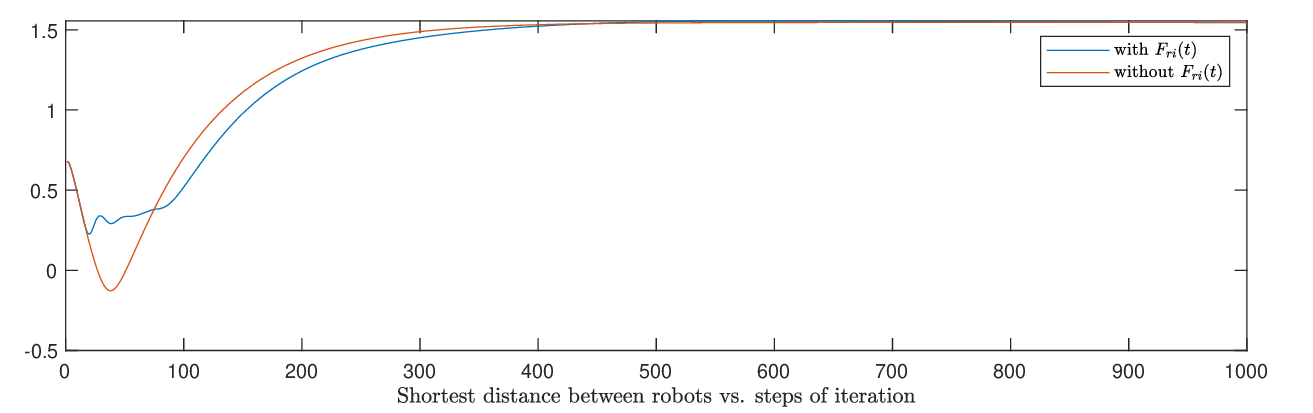}
\vspace{-0.3cm}
\caption{Shortest distance between robots comparison between the control input with or without collision avoidance mechanism.}
\label{Res_C1D}
\end{center}
\vspace{-0.5cm}
\end{figure} 

\begin{figure}[t]
\begin{center}
\includegraphics[width=0.78\textwidth]{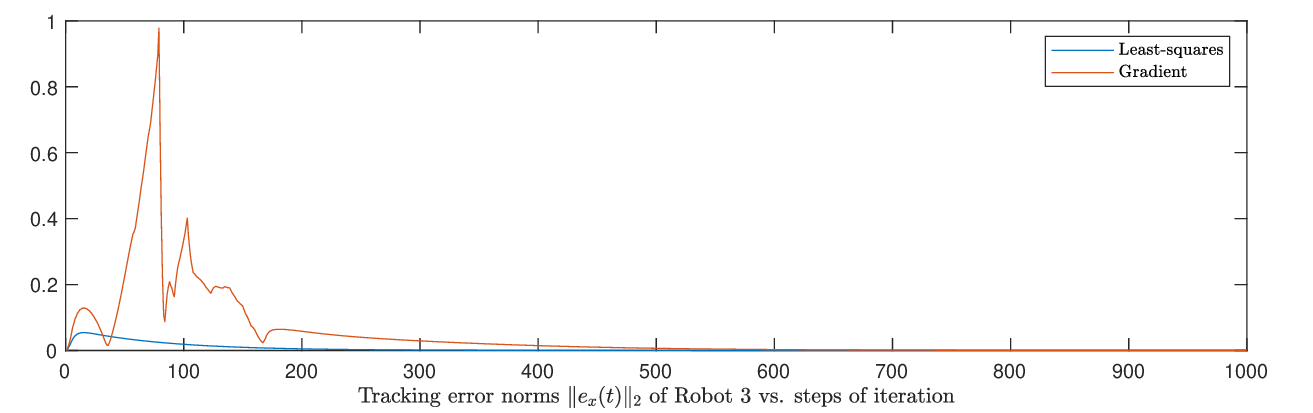}
\vspace{-0.3cm}
\caption{Tracking error norm comparison between the control scheme using least-squares or gradient algorithm (gain parameter of the gradient adaptive law is set as 1.9).}
\label{Res_C2G}
\end{center}
\vspace{-0.9cm}
\end{figure}

\section{Conclusions}

In this paper, we developed a new least-squares algorithm based adaptive
control scheme to solve the long-standing discrete-time indirect
adaptive state tracking control problem, which can ensure that the
state vector $x(t)$ of an unknown MIMO system: $x(t+1) = Ax(t)+Bu(t)$,
asymptotically tracks the state vector $x_m(t)$ of a chosen stable
reference model system. We also applied the developed adaptive control
scheme to solve a multiple mobile robot position and velocity tracking
control problem. Hence, this paper offered new solutions to two technical
problems: the stable adaptive law derivation for the discrete-time
MIMO adaptive state tracking control scheme based on the accumulative
parameterized estimation error minimization, and the design and
simulation of the least-squares adaptive multiple robot control scheme
with additional repulsive forces for both asymptotic trajectory tracking
and collision avoidance, whose desired performance is illustrated by
simulation results.
In the future, adaptive state tracking control techniques can be considered to deal with nonlinearities in robot systems.

\section*{Acknowledgment}

The authors would like to thank the financial support from a Ford University Research Program grant and the collaboration and help from Dr. Suzhou Huang and Dr. Qi Dai of Ford Motor Company for this research.

\renewcommand{\theequation}{A.\arabic{equation}}
\renewcommand{\theprop}{A.\arabic{prop}}
\renewcommand{\thedefn}{A.\arabic{defn}}
\setcounter{equation}{0}
\setcounter{defn}{0}
\setcounter{prop}{0}

\section*{Appendix: Proof of Theorem \ref{Theorem:TrackingPerformance}}
\label{Sec: App}

From $\hat{x}(t) = x_m(t) $ and \eqref{EstErrorID_MIMO}, we have 
\begin{equation}
    x(t) = {x_m}(t) - \epsilon(t) + \begin{bmatrix}
    \sum_{j=1}^m \xi_{1j}(t)\\
    \vdots\\
    \sum_{j=1}^m \xi_{nj}(t)
    \end{bmatrix}
    .  
    \label{x4bound}
\end{equation}

To prove the boundedness of $x(t)$, we need to use the definition and some properties of the matrix norm $\vertiii{\cdot}$ that is ``induced" by a vector norm $||\cdot||$ defined in Definition \ref{defn: MatrixNorm}.

Denote the $l^1$ norm of $x(t) $ as 
\begin{equation}
     \|{x}(t)\|  = {|{x}_1(t) |+ \cdots + |{x}_n(t)|},
    \label{xboundraw}
    \end{equation}
    from Proposition \ref{prop: MatrixNorm}, the $l^1$ vector norm induced matrix norm is $\vertiii{A}_1 = \max_{\|x\|_1=1} \|Ax\|_2 = \max_{1\leq i\leq n} \sum_{i=1}^n |a_{ij}| $, the maximum of the column sums of $A = [a_{i j}] \in M_{m\times n}$.

    From \eqref{x4bound} and triangle inequality, we have 
    \begin{equation}
    \begin{split}
        \|{x}(t)\| \leq&\, \|x_m(t)\| + \vertiii{\big(\kappa I + {Z}^T(t)P(t-1){Z}(t)\big)^{\frac{1}{2}}} \left\|N^{-\frac{1}{2}}(t)\epsilon(t)\right\| + {   \sum_{i=1}^n \sum_{j=1}^m  |\xi_{ij}(t)|},
    \end{split}
    \label{xl1bound}
    \end{equation}
    where $N^{-\frac{1}{2}}(t)\epsilon(t) \in L^2 \cap L^{\infty}$ from Lemma \ref{Lemma: MIMOStabIND}. 

According to \eqref{NgeqaNb}, we have 
\begin{equation}
    \vertiii{(\kappa I + {Z}^T(t)P(t-1){Z}(t))^{\frac{1}{2}}} \leq \vertiii{a_0 \left( I + {Z}^T(t){Z}(t)\right)^{\frac{1}{2}}}.
    \label{normNgeqaNb}
\end{equation}
For positive definite matrix $\Bar{N}(t) = I + {Z}^T(t){Z}(t)$,
we have orthogonal matrix $Q_m$ such that $\Bar{N}^{-1}(t) = Q_m(t)\Lambda_m(t)Q_m^T(t)$, with $\Lambda_m(t) = \diag(\lambda_{m}^1(t),\ldots,\lambda_{m}^n(t))$ containing all the eigenvalues $\lambda_{m}^i(t),\,i = 1,\ldots,n$ of matrix $N^{-1}(t)$. Then, ${Z}^T(t){Z}(t)$ can be rewritten as 
    \begin{equation}
       \left(Z^T(t)Z(t)\right)^{\frac{1}{2}} = Q_m(t)\Lambda^{-\frac{1}{2}}_m(t)Q_m^T(t) , 
    \end{equation}
    which means the eigenvalues of $\Bar{N}(t) = I + {Z}^T(t){Z}(t)$ are 
    $\frac{1}{\sqrt{\lambda_{m}^i}}(t),\,i = 1,\ldots,n$. 
    
    With $\lambda_{Z^TZ}^i(t)$ denoting the $i$th eigenvalue of $Z^T(t)Z(t)$ and the definition of $Z(t)$ in \eqref{ZDefineID_Diag}, we have the following inequality about the upper-bound of the eigenvalue of $\Bar{N}^{\frac{1}{2}}(t)$:
    \begin{equation}
       \lambda_m^i(t) = \sqrt{1 + \lambda_{Z^TZ}^i(t)} \leq   1 +  \sqrt{ \sum_{i=1}^n \zeta_i^T(t)\zeta_i(t)} 
    \end{equation}
    Then, based on \eqref{normNgeqaNb}, we have 
    \begin{equation}
    \begin{split}
        \vertiii{(\kappa I + {Z}^T(t)P(t-1){Z}(t))^{\frac{1}{2}}} 
        &\leq a_0\left(1 + \sqrt{ \sum_{i=1}^n \zeta_i^T(t)\zeta_i(t)}  \right) \, \vertiii{Q_m}^2\\
        &\leq a_0\left( 1 +  \sum_{i=1 }^n\|\zeta_i\|\right) \vertiii{Q_m}^2
        ,
    \end{split}
    \label{UpperboundNbar}
    \end{equation}
    where $\vertiii{Q_m}^2$ is bounded as every column of the orthogonal matrix $Q$ is a direction vector with a finite dimension.

For further proof for Theorem \ref{Theorem:TrackingPerformance}, we introduce the following definitions and propositions about basic operator concepts from \cite{T_DTASTC}. A linear operator $T(z,t)$ is applied  to indicate the relationship between the input signal $u(t)$ and the output signal $y(t)$ of a possible time-varying dynamic system as $y(t) = T(z,\cdot)[u](t)$. 

\begin{defn}
    A linear operator $T(z,t)$ is stable and proper of 
    \begin{equation}
        |y(t)| = |T(z,\cdot)[u](t)|\leq \beta \sum_{\tau = 0}^{t-1}e^{-\alpha(t-a-\tau)}|u(\tau)| + \gamma|u(\tau)|
    \end{equation}
    for any $u(t)\in \mathbb{R}$, all $t\geq0$, and some constant $\beta>0$, $\alpha>0$, $\gamma=0$. A linear operator $T(z,t)$ is stable and strictly proper if it is stable with $\gamma = 0$.   
\end{defn}

\begin{prop}
    A linear operator $T(z, t)$ is stable and proper if it represents a system described
by the difference equation\begin{equation}
    P(z)[y](t) = Q(z,t)[u](t),
\end{equation}
where $P(z)$ is an $n$th-order constant coefficient polynomial whose zeros are all inside the unit circle of
the complex z-plane, and $Q(z, t)$ is an nth-order polynomial with bounded and possibly time-varying
coefficients. If the order of $Q(z, t)$ is less than n, then $T(z, t)$ is stable and strictly proper.
\end{prop}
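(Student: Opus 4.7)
The plan is to reduce the scalar difference equation $P(z)[y](t) = Q(z,t)[u](t)$ to a first‑order vector recursion and exploit the Schur stability of $P(z)$ to obtain exponential decay of the state transition. Write $P(z) = z^n + p_{n-1} z^{n-1} + \cdots + p_0$ and $Q(z,t) = q_n(t) z^n + \cdots + q_0(t)$, with a uniform bound $|q_i(t)| \le \bar q$ by hypothesis. Introduce the state $\xi(t) = [y(t), y(t+1), \ldots, y(t+n-1)]^T$ and a companion matrix $A_P$ whose characteristic polynomial is $P(z)$. Since every zero of $P(z)$ lies strictly inside the unit disc, $A_P$ is Schur stable, so by the standard Jordan‑form argument there exist constants $c>0$ and $\alpha>0$ such that $\|A_P^k\| \le c\, e^{-\alpha k}$ for all integers $k\ge 0$.

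Next I would handle the input map. Expanding the difference equation and collecting the highest‑order term, I rewrite the system as $\xi(t+1) = A_P\,\xi(t) + b(t,u)$, where $b(t,u)$ is an $n$‑vector whose entries are linear combinations of $u(t), u(t-1), \ldots, u(t-n+1)$ with coefficients drawn from $\{q_i(t)\}$ (after, if $\deg Q = n$, explicitly separating the direct feed‑through; see below). Because $|q_i(t)| \le \bar q$, there is a constant $\bar b$ with $\|b(t,u)\| \le \bar b\,\max_{0\le j\le n-1}|u(t-j)|$. The discrete variation‑of‑constants formula then gives
\begin{equation*}
\xi(t) = A_P^{\,t}\xi(0) + \sum_{\tau=0}^{t-1} A_P^{\,t-1-\tau}\, b(\tau,u).
\end{equation*}
Since $|y(t)|$ is the first component of $\xi(t)$, combining the exponential bound on $\|A_P^k\|$ with the bound on $\|b(\tau,u)\|$, absorbing the transient from $\xi(0)$ (which decays exponentially and is independent of $u$, or can be rolled into the zero‑input portion after setting $u(\cdot)=0$ for negative arguments), and using $e^{-\alpha(t-1-\tau-j)} \le e^{\alpha(n-1)}\,e^{-\alpha(t-1-\tau)}$ to shift the $u(\tau-j)$ indices, yields the bound
\begin{equation*}
|y(t)| \le \beta \sum_{\tau=0}^{t-1} e^{-\alpha(t-1-\tau)}\,|u(\tau)| + \gamma\,|u(t)|
\end{equation*}
for suitable $\beta>0$ and $\gamma \ge 0$, which is exactly the required stable‑and‑proper estimate.

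The strict‑properness case is handled by the same argument. If $\deg Q < n$, then $q_n(t) \equiv 0$, so the $b(t,u)$ vector depends only on $u(t-1), \ldots, u(t-n+1)$ and not on $u(t)$; hence no direct feed‑through term appears in the $\xi(t)$ formula, and one obtains the bound with $\gamma = 0$. Conversely, if $\deg Q = n$, rearranging the equation as $y(t+n) = q_n(t) u(t+n) + [\text{lower‑order terms in } y,u]$ identifies a feed‑through $\gamma = \bar q$ cleanly; the rest of the driving term folds into $b(t,u)$ as above. The main technical obstacle is bookkeeping the time‑varying coefficients $q_i(t)$: because they are evaluated at the current time step (and not shifted in sync with $z$), one must be careful that expressions such as $z^i[u](t) = u(t+i)$ are formed before multiplying by $q_i(t)$, so that the shift and the coefficient do not interact. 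Once this ordering is fixed, the uniform bound $|q_i(t)|\le \bar q$ suffices to drive the entire estimate, completing the proof.
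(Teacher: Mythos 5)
The paper does not actually prove this proposition: it is stated in the appendix as one of several ``basic operator concepts'' imported verbatim from \cite{T_DTASTC}, so there is no in-paper argument to compare yours against. Judged on its own, your proof is correct and is the natural self-contained way to establish the claim: converting $P(z)[y]=Q(z,\cdot)[u]$ to a companion-form recursion $\xi(t+1)=A_P\xi(t)+b(t,u)$, using Schur stability of $A_P$ to get $\|A_P^k\|\le c\,e^{-\alpha k}$, and then reading off the convolution bound is sound, and you correctly isolate the two delicate points — the direct feed-through term exists exactly when $\deg Q=n$ (giving $\gamma>0$ versus $\gamma=0$), and the time-varying coefficients $q_i(t)$ must multiply the already-shifted signal $z^i[u](t)=u(t+i)$ rather than being shifted themselves, which is why their boundedness is all that is needed. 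The route the cited reference presumably takes is the equivalent impulse-response one: expand $1/P(z)$ into a kernel $h(k)$ with $|h(k)|\le c\,e^{-\alpha k}$ by partial fractions and write $y(t)=\sum_{\tau}h(t-\tau)\,Q(z,\tau)[u](\tau)$; your state-space version buys the same exponential weight from $A_P^k$ without needing the partial-fraction expansion, at the cost of slightly heavier index bookkeeping (the re-indexing factor $e^{\alpha(n-1)}$ and the fact that each $u(\tau)$ is counted at most $n$ times in the re-indexed sum — both of which you handle). Two cosmetic points worth tightening if this were written out in full: normalize $P(z)$ to be monic at the outset (its leading coefficient is nonzero since it is $n$th order), and state explicitly that the bound is taken with zero initial conditions, consistent with the operator definition $y=T(z,\cdot)[u]$.
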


\begin{defn}
    A linear operator $T(z, t)$ is nonnegative if $T(z,\cdot)[u](t)\geq0,\,\forall u(t)\geq0, \, \forall t\geq0$. A nonnegative linear operator $T_1(z, t) $ dominates a linear operator $T_2(z, t)$ if
    \begin{equation}
        |T_2(z,\cdot)[u](t)|\leq T_1(z,\cdot)[u](t),\,\forall u(t)\geq0, \, \forall t\geq0. 
    \end{equation}
    A nonnegative linear operator $T(z, t)$ is nondecreasing if
    \begin{equation}
        |T(z,t)[u_1](t)|\leq T(z,t)[u_2](t),\,\forall u_2(t)\geq u_1(t)\geq 0, \, \forall t\geq0. 
    \end{equation}
\end{defn}

\begin{prop}
    For any stable and proper (strictly proper) linear operator $T_2(z, t)$, there exists
a nonnegative, stable and proper (strictly proper) linear operator $T_1(z, t)$ which dominates $T_2(z, t)$.
Such an operator $T_1(z, t)$ can be chosen to be nondecreasing.
\end{prop}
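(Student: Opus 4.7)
The plan is to construct $T_1$ explicitly from the exponential bound that defines stability and properness of $T_2$, and then verify the four required properties (nonnegativity, stability/properness, domination, nondecreasingness) directly from that construction. Since the definition already supplies constants $\alpha>0$, $\beta>0$, and $\gamma\geq 0$ such that
\begin{equation}
|T_2(z,\cdot)[u](t)|\leq \beta\sum_{\tau=0}^{t-1}e^{-\alpha(t-\tau-1)}|u(\tau)|+\gamma|u(t)|,
\nonumber
\end{equation}
the natural candidate is the operator whose action on a signal $v(t)$ is
\begin{equation}
T_1(z,\cdot)[v](t):=\beta\sum_{\tau=0}^{t-1}e^{-\alpha(t-\tau-1)}v(\tau)+\gamma\, v(t),
\nonumber
\end{equation}
and when $T_2$ is strictly proper (so $\gamma=0$) we drop the last term. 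First I would observe that $T_1$ is linear by inspection.

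Next I would show $T_1$ is representable by a difference equation satisfying the hypothesis of the earlier proposition on stable, proper operators. The convolution with kernel $\beta e^{-\alpha(t-\tau-1)}$ is generated by the first-order recursion $y(t+1)=e^{-\alpha}y(t)+\beta u(t)$, i.e.\ $(z-e^{-\alpha})[y](t)=\beta[u](t)$; since $|e^{-\alpha}|<1$, the characteristic polynomial $P(z)=z-e^{-\alpha}$ has its zero inside the unit circle, so this summation part is stable and strictly proper. Adding the direct term $\gamma v(t)$ yields a stable and proper operator, and omitting it (when $\gamma=0$) yields a stable and strictly proper one. Thus $T_1$ inherits the required regularity from $T_2$.

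Then I would verify the three remaining properties. Nonnegativity is immediate: for any $u(t)\geq 0$, every summand in the definition of $T_1[u](t)$ is nonnegative, so $T_1[u](t)\geq 0$. Domination is also immediate from the construction: for $u(t)\geq 0$ one has $|u(\tau)|=u(\tau)$, and the defining inequality for $T_2$ becomes exactly $|T_2[u](t)|\leq T_1[u](t)$. For the nondecreasing property, take $u_2(t)\geq u_1(t)\geq 0$. By linearity,
\begin{equation}
T_1[u_2](t)-T_1[u_1](t)=T_1[u_2-u_1](t)\geq 0,
\nonumber
\end{equation}
where the last inequality follows because $u_2-u_1\geq 0$ and the kernel of $T_1$ is nonnegative; combined with $T_1[u_1](t)\geq 0$, this gives $|T_1[u_1](t)|=T_1[u_1](t)\leq T_1[u_2](t)$, which is exactly the nondecreasing condition.

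The only mild obstacle is making sure the ``stable and proper'' condition the earlier proposition actually requires matches what the constructed $T_1$ satisfies, in particular the handling of the direct-feedthrough term $\gamma v(t)$ and the convention of whether $P(z)$ and $Q(z,t)$ are required to share the same order. In the strictly proper case this is trivial because $\gamma=0$ and the recursion is genuinely first-order with a zero-order forcing; in the merely proper case one writes $P(z)[y](t)=Q(z,t)[u](t)$ with $P(z)=z-e^{-\alpha}$ and $Q(z)=\gamma z+(\beta-\gamma e^{-\alpha})$, both of order one with $P$ having its root inside the unit circle, so the earlier proposition applies and delivers stability and properness. With that verification in place, the four bullet-style properties combine to give exactly the statement of the proposition.
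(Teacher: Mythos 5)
Your construction is correct, but note that the paper itself offers no proof of this proposition: it is quoted verbatim as a background fact from the cited reference on gradient-algorithm-based discrete-time state tracking, so there is nothing in the paper to compare against line by line. Your explicit choice $T_1(z,\cdot)[v](t)=\beta\sum_{\tau=0}^{t-1}e^{-\alpha(t-\tau-1)}v(\tau)+\gamma v(t)$ is the natural (and standard) one, and all four verifications go through: the kernel is nonnegative, domination is the defining inequality of $T_2$ read off for nonnegative inputs, and nondecreasingness follows from linearity plus nonnegativity exactly as you argue. Your realization $(z-e^{-\alpha})[y](t)=(\gamma z+\beta-\gamma e^{-\alpha})[u](t)$ is also correct and correctly distinguishes the proper case ($\deg Q=\deg P=1$) from the strictly proper case ($\gamma=0$, $\deg Q=0<1$). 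One small simplification worth noting: you do not actually need to route stability and properness of $T_1$ through the difference-equation proposition at all, since $T_1$ satisfies the defining exponential bound of Definition A.1 directly (with the same $\alpha,\beta,\gamma$) by the triangle inequality applied to its own kernel; the difference-equation detour is harmless but optional. The only caveat is that the paper's Definition A.1 contains typographical slips (the exponent written as $t-a-\tau$, the feedthrough term written as $\gamma|u(\tau)|$, and the constraint written as $\gamma=0$ rather than $\gamma\geq0$); you have silently corrected these to the intended reading, which is the right call, but you should state that interpretation explicitly since the entire construction hinges on it.
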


\textbf{Operator-based signal analysis}. For the $j$th element of $u_j(t)$, we have 
\begin{equation}
    u_j(t) = \frac{\Bar{\theta}_j^T(t)x(t) + r_j(t)}{\theta_{2j}(t)}, 
\end{equation}
where $\Bar{\theta}_j(t)$ denotes a part of $\theta_j(t)$ such that $\theta_j(t) = [\Bar{\theta}_j(t)^T,\theta_{2j}(t)]^T$. From Lemma \ref{Lemma: MIMOStabIND} and the parameter projection, both $\frac{\Bar{\theta}_j(t)}{\theta_{2j}(t)}$ and $\frac{1}{\theta_{2j}(t)}$ are bounded. Then, we have the following inequality about the $l^1$ norm of $\omega_j(t) \left[-x^T(t),u_j(t)\right]^T ,\,j = 1,\ldots,m$ defined in \eqref{Def_Omegai}:
\begin{equation}
    \|\omega_j(t)\| \leq \left(1+ \left\|\frac{\Bar{\theta}_j(t)}{\theta_{2j}(t)}\right\|\right) \|x(t)\| + \left\|\frac{r_j(t)}{\theta_{2j}(t)}\right\|, 
\end{equation}
where $\left\|\frac{r_j(t)}{\theta_{2j}(t)}\right\|$ is bounded from the bounedness of $\frac{1}{\theta_{2j}(t)}$.

Thus, for ${\zeta}_{ij}(t) = w_{ij}(z)[{\omega}_j](t)$ in \eqref{ZetaID_MIMO_Diag}, there exists a stable, strictly proper and nonnegative operator $T_{{\zeta}_{ij}}(z)$ such that 
\begin{equation}
    \|{\zeta}_{ij}(t)\| \leq T_{{\zeta}_{ij}}(z)[\|x\|](t) + c_{{\zeta}_{ij}},
    \label{zeta4bound}
\end{equation}
for some constant $c_{{\zeta}_{ij}}> 0 $.

With \eqref{UpperboundNbar} and \eqref{zeta4bound}, there exist a stable, strictly proper and nonnegative operator $T_m(z)$ and a constant $c_m > 0$ such that
\begin{equation}
   a_0 \sum_{i=1}^n \|\zeta_{i}\|\,\vertiii{Q_m}^2 \leq T_m(z)[\|x\|](t) + c_m.
\end{equation}

For the auxiliary signal $\xi_{ij}(t) = {\theta}_j^T(t){\zeta}_{ij}(t)-w_{ij}(z)[{\theta}_j^T {\omega}_j](t) $ in \eqref{XiID_MIMO_Diag}, we set a minimal realization $(A_{ij},b_{ij},c_{ij})$ of $w_{ij}(z) = c_{ij}(zI-A)^{-1}b_{ij}$, with $h_{c,ij}(z) = c_{ij}(zI-A)^{-1}$ and $h_{b,ij}(z) = (zI-A)^{-1}b_{ij}$ both stable and strictly proper. Then, $\xi_{ij}(t)$ is expressed as 
\begin{equation}
\begin{split}
        \xi_{ij}(t) &= {\theta}_j^T(t){\zeta}_{ij}(t) - w_{ij}(z) [{\theta}_j^T {\omega}_j](t)\\
        & = h_{c,ij}(z)[(z-1)[{\theta}_j^T]zh_{b,ij}(z)[{\omega}_j]](t),\\
\end{split}
\end{equation}
where $(z-1)[{\theta}_j](t) ={\theta}_j(t+1) -{\theta}_j(t)\in L^2 $ and $zh_{b,ij}(z)$ is stable and proper.

The above $\xi_{ij}(t)$ can be further expressed as 
\begin{equation}
\begin{split}
        \xi_{ij}(t) &= h_{c,ij}(z)[(z-1)[{\theta}_j^T]h_{b,ij}(z)[z[{\omega}_j]]](t),
        \label{xi_linearoper}
\end{split}
\end{equation}
where $z[{\omega}_j](t) = {\omega_j}(t+1) = \left[-x^T(t+1),u_j(t+1)\right]$ with ${x}(t+1) = A{x}(t)+Bu(t)$. With the parameter projection and $u(t) = \Theta_2^{-1}(t)\Theta_1(t)x(t) +\Theta_2^{-1}(t)r(t)$, $z[{\omega}_j](t)$ in \eqref{xi_linearoper} can be expressed as 
\begin{equation}
\begin{split}
    \left[-\left(A{x} + B\left[\frac{\Bar{\theta}_1}{\theta_{21}},\ldots,\frac{\Bar{\theta}_m}{\theta_{2m}}\right]^Tx + B\rho \right)^T,  \frac{1}{\theta_{2j}^+} \Bar{\theta}_j^{+T} \left(A{x} + B\left[\frac{\Bar{\theta}_1}{\theta_{21}},\ldots,\frac{\Bar{\theta}_m}{\theta_{2m}}\right]^Tx + B\rho \right) + \rho_j^+   \right],
\end{split}
\label{zomega}
\end{equation}
 where $\rho(t) =[\rho_1,\ldots,\rho_m]^T = \left[\frac{r_1(t)}{\theta_{21}(t)},\ldots,\frac{r_m(t)}{\theta_{2m}(t)}\right]^T\in \mathbb{R}^{m}$, $\Bar{\theta}_j^{+}(t) = \Bar{\theta}_j (t+1)$, $\Bar{\theta}_{2j}^{+}(t) = \Bar{\theta}_{2j} (t+1)$, and ${\rho}_j^{+}(t) = {\rho}_j (t+1)$.

From Lemma \ref{Lemma: MIMOStabIND}, we have $\Delta_{{\theta}}(t)={\theta}(t+1)-{\theta}(t)\in L^{2}$ (that is, $\Delta_{\theta_j} (t) = 
\theta_j(t+1) - \theta_j(t) \in L^2$), and the boundedness of $\frac{\Bar{\theta}_j}{\theta_{2j}}$, $\frac{\Bar{\theta}_j^+}{\theta_{2j}^+}$, and $\rho_j(t)$ for $j =1\ldots,m$. Then, for $\xi_{ij}(t)$ in \eqref{xi_linearoper}, there exist stable, strictly proper and nonnegative operators $T_{\xi_{ij} c}(z)$, $T_{\xi_{ij} b}(z,t)$ and a constant $c_{\xi_{c_{ij}}}>0$ such that
\begin{equation}
    |\xi_{ij}(t)|\leq T_{\xi_{ij} c}(z)[{\|\Delta_{{{\theta}_j}}(t)}\|T_{\xi_{ij} b}(z,\cdot)[\|x\|]](t) + c_{\xi_{c_{ij}}}.
\end{equation}

For the corresponding vector $\xi_j = \left[\xi_{1j}(t),\ldots,\xi_{nj}(t)\right]^T$, we have stable, strictly proper and nonnegative operators $T_{\xi_{ij} c}(z)$, $T_{\xi_{j} b}(z,t)$ and a constant $c_{\xi_{c_{j}}}>0$ such that \begin{equation}
    |\xi_{j}(t)|\leq T_{\xi_{j} c}(z)[{\|\Delta_{{{\theta}_j}}(t)}\|T_{\xi_{j} b}(z,\cdot)[\|x\|]](t) + c_{\xi_{c_{j}}}.
\end{equation}
Then, we have the following inequality for $\sum_{i=1}^n \sum_{j=1}^m|\xi_{ij}(t)|$ in \eqref{xl1bound}:
\begin{equation}
    \sum_{i=1}^n \sum_{j=1}^m|\xi_{ij}(t)|\leq \sum_{j=1}^m\left( T_{\xi_{j} c}(z)[{\|\Delta_{{{\theta}_j}}(t)}\|T_{\xi_{j} b}(z,\cdot)[\|x\|]](t) + c_{\xi_{c_{j}}} \right). 
    \label{linearoperate_def_end}
\end{equation}

With \eqref{zeta4bound}-\eqref{linearoperate_def_end}, \eqref{xl1bound} can be rewritten as 
\begin{equation}\begin{split}
    \|{x}(t)\| \leq &\,\|x_m(t)\| + (a_0\vertiii{Q_m}^2 + T_m(z)[\|x\|](t) + c_m)  \left\|\big(\kappa I + Z^T(t)P(t-1)Z(t)\big)^{-\frac{1}{2}}\epsilon(t)\right\|\\& +  \sum_{j=1}^m\left( T_{\xi_{j} c}(z)[{\|\Delta_{{{\theta}_j}}(t)}\|T_{\xi_{j} b}(z,\cdot)[\|x\|]](t) + c_{\xi_{c_{j}}} \right)\\
    \leq & \, \|x_m(t)\| + (a_0\vertiii{Q_m}^2+c_m) \left\|\big(\kappa I + Z^T(t)P(t-1)Z(t)\big)^{-\frac{1}{2}}\epsilon(t)\right\| +\sum_{j=1}^mc_{\xi_{c_j}}\\ &
    +T_m(z)[\|x\|](t) \|\big(\kappa I + Z^T(t)P(t-1)Z(t)\big)^{-\frac{1}{2}}\epsilon(t)\|\\&  +  \sum_{j=1}^m T_{\xi_{j} c}(z)[{\|\Delta_{{{\theta}_j}}(t)}\|T_{\xi_{j} b}(z,\cdot)[\|x\|]](t),
\end{split}
\label{x_bound_almost}
\end{equation}
where $\|x_m(t)\| + (a_0\vertiii{Q_m}^2+c_m) \|\big(\kappa I + Z^T(t)P(t-1)Z(t)\big)^{-\frac{1}{2}}\epsilon(t)\| + \sum_{j=1}^mc_{\xi_{c_j}}$ is bounded. There exists a stable, strictly proper, nonegative and nondecreasing operator $T_0(z)$ such that 
\begin{equation}
    \begin{split}
       & T_m(z)[\|x\|](t) \left\|\big(\kappa I + Z^T(t)P(t-1)Z(t)\big)^{-\frac{1}{2}}\epsilon(t)\right\|  + \sum_{j=1}^m T_{\xi_{j} c}(z)[{\|\Delta_{{{\theta}_j}}(t)}\|T_{\xi_{j} b}(z,\cdot)[\|x\|]](t) \leq  \\
       & \left\|\big(\kappa I + Z^T(t)P(t-1)Z(t)\big)^{-\frac{1}{2}}\epsilon(t)\right\|  T_0(z)[\|x\|](t) + \sum_{j=1}^m T_{\xi_{j} c}(z)[{\|\Delta_{{{\theta}_j}}(t)}\|T_{0}(z,\cdot)[\|x\|]](t)
  .
    \end{split}
    \label{Txib2T0}
\end{equation}

From \eqref{mDefinition}, and \eqref{x_bound_almost}-\eqref{Txib2T0}, it follows that
\begin{equation}
\begin{split}
        T_0(z)[\|x\|](t) &\,\leq      T_0(z)[ \|N^{-\frac{1}{2}}\epsilon \| T_0(z)[ \|x\|]](t)  \\& + T_{0}(z)[ \sum_{j=1}^m T_{\xi_j c}[||\Delta_{{\theta_j}}||T_{0}(z)[\|x\|]]](t) + c_0,
\end{split}
\label{T0Def}
\end{equation}
where $c_0\geq T_0(z)[\|x_m\| + (a_0\vertiii{Q_m}^2+c_m) \|N^{-\frac{1}{2}}\epsilon\| + \sum_{j=1}^mc_{\xi_{c_j}}](t)$
is a constant. For \eqref{T0Def}, there exists a stable and strictly proper operator $T(z)$ such that
\begin{equation}
    \begin{split}
        T_0(z)[\|x\|](t) \leq& \,T(z)[ \|N^{-\frac{1}{2}}\epsilon \| T_0(z)[ \|x\|]](t) +  T(z)[ \sum_{j=1}^m \|\Delta_{\theta_j}\|T_{0}(z
        )[\|x\|]](t) + c_0\\ 
        = &\,T(z)[\left(\|N^{-\frac{1}{2}}\epsilon \| + \sum_{j=1}^m ||\Delta_{{\theta}_j}||\right) T_0(z)[\|x\|]](t) + c_0,
    \end{split}
    \label{xbounded}
\end{equation}
where $\|N^{-\frac{1}{2}}(t)\epsilon(t) \| + \sum_{j=1}^m ||\Delta_{{\theta}_j}(t)|| \in L^2\cap L^{\infty}$. The $L^2$ property of $\|N^{-\frac{1}{2}}(t)\epsilon (t)\| + \sum_{j=1}^m ||\Delta_{{\theta}_j}(t)||$ ensures a small gain for the feedback structure in terms of $T_0(z)[\|x\|](t)$. A small gain theorem can be applied to \eqref{xbounded}, for the boundedness of $T_0(z)[\|x\|](t)$, and so is $x(t)$ from \eqref{x_bound_almost}-\eqref{Txib2T0}. Correspondingly, $u(t)  = \Theta_2^{-1}(t) \left(\Theta^T_1(t)x(t) +\Theta_2^{-1}(t)r(t)\right)$ is also bounded. Thus, all the system signals are bounded. 

Then, $\epsilon(t)\in L^2$ and $\xi_{ij}(t)\in L^2$ from \eqref{xi_linearoper} with $(z-1)[{\theta}](t) = {\theta}(t+1)-{\theta}(t )\in L^2$. Finally, $  x(t)-x_m(t)\in L^2$ so that $\lim_{t\rightarrow\infty} \left(x(t)-x_m(t) \right)= 0. $\hfill$\nabla$

\end{document}